\theoremstyle{plain}
\newtheorem{thm}{Theorem}[section]
\newtheorem{op}[thm]{Proposition}
\newtheorem{cor}[thm]{Corollary}
\newtheorem*{thm*}{Theorem}
\newtheorem*{lemma*}{Lemma}
\newtheorem*{prop*}{Proposition}
\newtheorem*{cor*}{Corollary}
\newtheorem*{conj*}{Conjecture}
\theoremstyle{definition}
\theoremstyle{remark}
\newcommand{\ind}{\mbox{$\perp \kern-5.5pt \perp$}}
\begin{document}

\author{Colby Long and Laura Kubatko}

\title[Effects of Gene Flow on the Coalescent]{The effect of gene flow on coalescent-based species-tree inference}

\begin{abstract}
Most current methods for inferring species-level phylogenies under the coalescent model assume
that no gene flow occurs following speciation. While some studies have examined the impact of gene flow on 
estimation accuracy for certain methods (e.g., \cite{eckertcarstens2008,chungane2011,leacheetal2014}), limited analytical work has been undertaken to directly assess
the potential effect of gene flow across a species phylogeny.  In this paper, we consider a three-taxon isolation-with-migration model that allows gene flow between sister taxa for a brief period following speciation, 
as well as variation in the effective population sizes across the tree. 
We derive the probabilities of each of the three gene tree topologies under this model, 
and show that for certain choices of the gene flow and effective population size parameters, 
anomalous gene trees (i.e., gene trees that are discordant with the species tree but that have higher probability than the gene tree concordant with the species tree) exist. We characterize the region of parameter space producing anomalous trees, and show that the probability of the gene tree that is concordant with the species tree can be arbitrarily small. 
We then show that the SVDQuartets method is theoretically valid under the model of gene flow between
sister taxa. 
We study its performance on simulated data and compare 
it to two other commonly-used methods for
species tree inference, ASTRAL and MP-EST. 
The simulations show that ASTRAL and MP-EST can be
statistically inconsistent when gene flow is present, while SVDQuartets performs well, 
though large sample sizes may be required for certain parameter choices.
\end{abstract}

\maketitle

\section*{Introduction}

Nearly all of the currently-used methods for coalescent-based estimation of species trees assume that speciation occurs at
a precise instant of time, with immediate cessation of gene flow, and that gene flow between distinct species elsewhere on the phylogeny
does not occur.  While several
studies have examined the impact of gene flow on the performance of species-level phylogenetic inference (e.g., \cite{eckertcarstens2008,chungane2011,leacheetal2014}), analytic 
analyses of the potential impact of gene flow are limited, despite the development of mathematical tools that could facilitate such study
(e.g., \cite{hobolthetal2011,andersenetal2014,tiankubatko2016}).  Recently, Tian and Kubatko (2016) \nocite{tiankubatko2016} considered a 
three-species IM (isolation-with-migration) model  \cite{heyneilsen2004,hey2010,wanghey2010} that allowed gene flow between sister species, and derived the probability distribution 
of gene tree topologies and gene tree histories under this model.  Their work generalized the work of Zhu and Yang (2012), who considered 
a similar model but with gene flow allowed only between the terminal sister taxa. Both Zhu and Yang (2012) \nocite{zhuyang2012}
and Tian and Kubatko (2016) used their results to estimate model parameters via maximum likelihood. 
Zhu and Yang (2012) considered sequence data arising from the model, while Tian and Kubatko (2016) considered the observed distribution of gene tree histories. 

Though Tian and Kubatko (2016) included some exploration of the effect of different choices of parameters on the gene tree topology distribution, they
focused on cases of symmetric migration
 in which the species tree satisfied the molecular clock and in which the effective population size (which determines the rate of coalescence) in each branch was the same.  
Under those conditions, they found that the gene tree topology
that matched the species tree occurred with probability at least as large as the other two gene tree topologies.  In other words, there were no
{\itshape anomalous gene trees} \cite{degnansalter2005} in this case.  The assumption of common effective population sizes is implicitly made in some species tree estimation 
methods \cite{liuetal2010}, while Bayesian approaches (e.g., \cite{heleddrummond2010}) 
generally include inference of separate effective
population size parameters in the 
posterior distribution they estimate.  
However, the effect of variation in the rate of coalescence on the distribution of gene tree topologies has not been carefully
examined,  in either the presence or the absence of gene flow (note that Degnan and Salter (2005) \nocite{degnansalter2005} also assumed the same
effective population sizes in all branches of the species tree). 

Here, we consider the distribution of gene tree topologies for a three-species IM model that generalizes that considered by Tian and Kubatko (2016) to include time 
periods with and without gene flow between both pairs of sister species. We further consider the effect of variation in population sizes and migration rates on 
this distribution.  We mathematically derive explicit conditions under which anomalous gene trees do exist, and show that, in fact, the probability of the
gene tree topology that matches the species tree can be made arbitrarily small.  We then consider the impact of these results on several 
of the commonly-used methods for species tree inference under the multispecies coalescent, including SVDQuartets \cite{chifmankubatko2014}, ASTRAL \cite{mirarabetal2014}, and MP-EST \cite{liuetal2010}.  We show that
SVDQuartets is theoretically valid under our general model, while ASTRAL and MP-EST appear to be statistically inconsistent in the presence of gene flow for some choices of parameters.  We compare the performance of all three methods using simulation for 
multilocus data, and additionally assess the performance of 
SVDQuartets for coalescent independent sites.  
We begin by carefully defining our model.

\section*{A General IM Model for Three Species}
Isolation-with-migration (IM) models have a long history in the population genetics literature \cite{heyneilsen2004,hey2010,wanghey2010}, and have more recently been considered in the context of species tree estimation
\cite{zhuyang2012,tiankubatko2016}.  Here we consider a more general version of the IM model than that presented in \cite{tiankubatko2016}.   Figure 
\ref{fig: IMmodel}  shows a species tree with three species, 
labeled $A$, $B$, and $C$,
with topology $((A,B),C)$. 
We assume that we have sampled one lineage from each species, and these lineages are
 labeled with lowercase letters $a$, $b$, and $c$.  
 Within each population
on the species tree, $\theta_{X}$ denotes the effective population size parameter, which determines the rate of coalescence, in the branch that represents species $X$
 ($X = A, B, C,$ or  $AB$, with $\theta_{AB}$ denoting the size of the population ancestral to species $A$ and $B$). 
 The parameter $\tau_1$ indicates the time from the present to the speciation event separating species 
 $A$ and $B$, and $\tau_2$ indicates the time from the present to the speciation event separating the ancestral species $AB$ from species $C$.  
 Following each speciation event (looking forward in time), there is a time
period over which migration occurs, indicated by blue shading in Figure \ref{fig: IMmodel}. 
The length of this time period of gene flow between species $A$ and $B$
is $t_{A \leftrightarrow B}$ and the length of the period of gene flow 
between the ancestor of $A$ and $B$ and species $C$ is $t_{AB \leftrightarrow C}$.
Thus, there are two intervals of time during which there is no gene flow (complete isolation), of lengths 
$\tau_1 - t_{A \leftrightarrow B}$ and $(\tau_2 - \tau_1) - t_{AB \leftrightarrow C}$.  
Finally, we define the rates of gene flow between species 
as $m_j$, where $j = 1$ for gene flow from $AB$ to $C$, $j=2$ for gene flow from $C$ to $AB$, $j=3$ for gene flow from $A$ to $B$, and $j=4$ for gene flow from $B$ to $A$ (looking
backward in time).

\begin{figure}[h]
    \centering
    \begin{subfigure}[b]{.5 \linewidth}
        \centering
	\includegraphics[height=4cm]{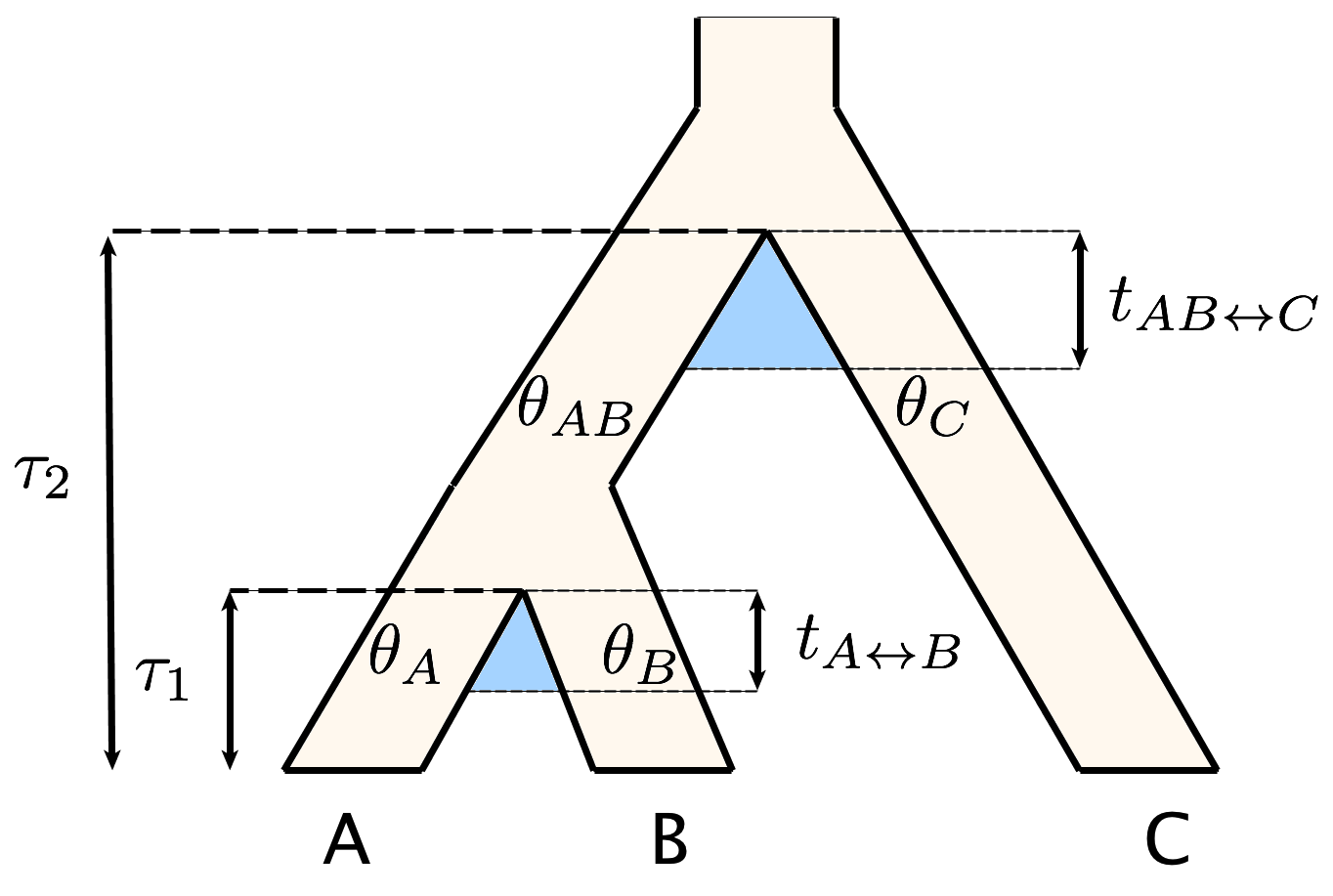}
         \caption{The 3-leaf IM model with gene flow.}
         \label{fig: IMmodel}
    \end{subfigure}%
    ~
    \begin{subfigure}[b]{.5 \linewidth}
        \centering
	\includegraphics[height=4cm]{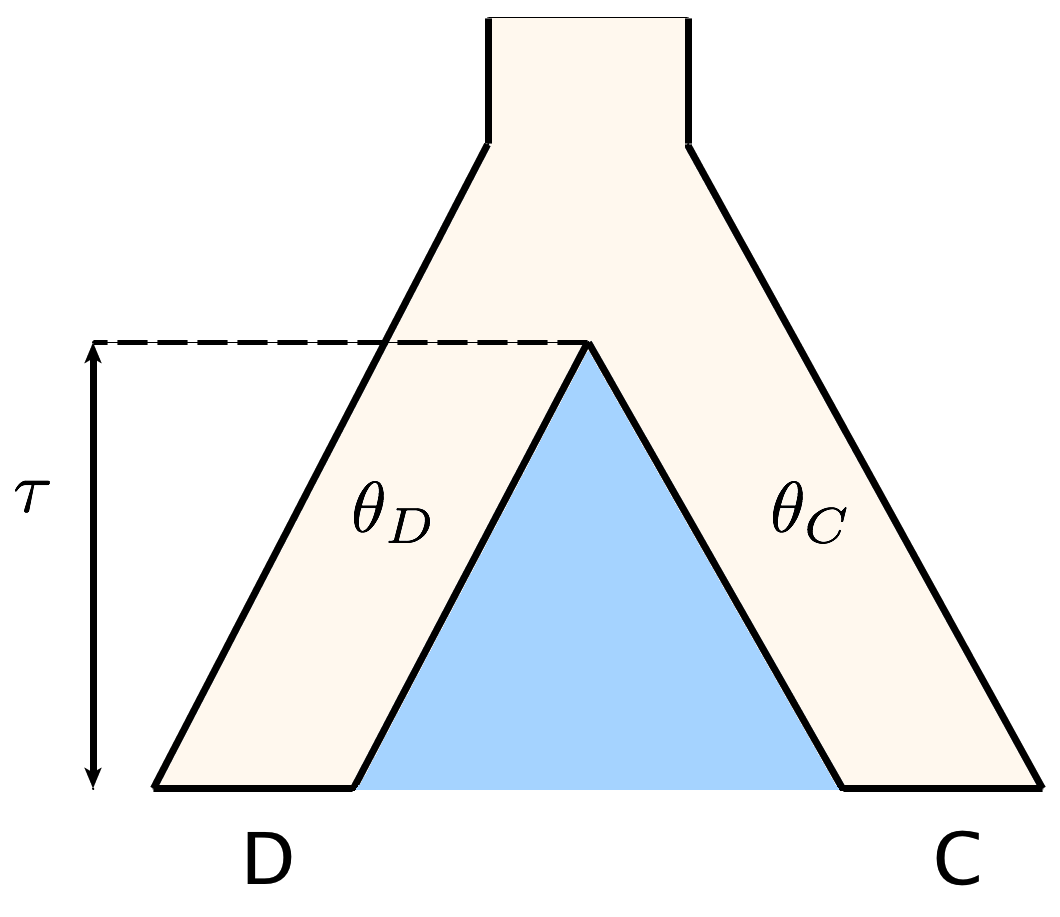}
	\caption{The reduced IM model with gene flow.}
	\label{fig: reducedIMmodel}
    \end{subfigure}
    \caption{The species tree parameters of the IM model with gene flow and
    the reduced IM model.}
\label{fig: SpeciesTrees}
\end{figure}

We are interested in computing the probabilities of the three possible gene tree topologies for different choices of the 
parameters defined above ($\theta_A, \theta_B, \theta_{AB}, \theta_{C}, m_1, m_2, m_3, m_4$, $t_{A \leftrightarrow B},
t_{AB \leftrightarrow C},\tau_1,$ and $\tau_2$).   We use lowercase letters $a$, $b$, and $c$ to denote lineages sampled from species $A$, $B$, and $C$, respectively, in the three 
possible gene trees.  The gene tree that matches the species tree is denoted by $((a,b),c)$, and will be referred to as the {\itshape concordant topology} (CT).  The two
{\itshape discordant topologies} (DT) are then $((b,c),a)$ and $((a,c),b)$.   
We begin by considering the effect of 
gene flow between the sister species $A$ and $B$. If there is any gene flow between these two species then there are only two possible outcomes.
First, the two lineages $a$ and $b$ could coalesce in either population $A$ or $B$, leading to a gene tree that matches the species tree.  Second, the lineages could fail to coalesce during the time over which $A$ and $B$ are separate species, resulting in both $a$ and $b$ entering the ancestral interval.   Varying the values of $\theta_A, \theta_B, m_3$ and $m_4$
will affect the probabilities of observing each of these outcomes, but regardless, it is impossible for coalescent events leading to discordant gene trees to occur before time $\tau_1$.
Likewise, for any fixed choice of $\theta_A, \theta_B, m_3$ and $m_4$, 
increasing $\tau_1$ will lead to a larger proportion of gene trees that match the species tree, 
while decreasing it will have the opposite effect. In either case, it is impossible to obtain
a gene tree discordant with the species tree as a result of any event that occurs before time $\tau_1$.

In order to understand the formation of anomalous gene trees, 
we will carefully analyze the second case in which $a$ and $b$ 
do not coalesce before $\tau_1$ and enter the ancestral interval. 
This allows us to study a reduced model with only two species, $C$ and $D$, where
$D$ represents the ancestor of $A$ and $B$ and contains two lineages, $a$ and $b$.  
In addition, since no discordant gene trees can form in the interval 
of no gene flow between population $AB$ and population $C$, we will exclude this interval.  
The reduced model over which we carry out calculations is depicted in Figure \ref{fig: reducedIMmodel}, where for simplicity we have labeled the time from the present to the speciation event $\tau$ and the population size 
for population $D$ by $\theta_D$. The migration parameters $m_1$ and $m_2$ are defined as before.   We note that the model in Figure \ref{fig: reducedIMmodel} can be thought of as a limiting case, obtained when $(\tau_2-\tau_1) - t_{AB \leftrightarrow C} \rightarrow 0$ and $m_3 = m_4 = 0$.   All of the results that we derive below can be straightforwardly extended to the model in Figure 1a.  We return to this point in the discussion at the end of the paper.

To compute gene tree topology probabilities under the model in Figure 1b, we use the approach introduced by Hobolth et al. (2011) in which we define a continuous-time 
Markov chain whose states are assignments of lineages to populations.  
For example, 
the ``starting state'' of our reduced Markov process (i.e., the state at the present time) is represented by $ddc$. What we mean by this is that the lineage sampled from species $A$ is in species $D$, the lineage sampled from species $B$ is in species $D$, and the lineage sampled from species $C$ is in species $C$.
In general, position in the string $xyz$ will denote whether we are referring to the lineage sampled from $A$, $B$, or $C$, respectively, and the letter assigned ($c$ or $d$) will indicate which species the lineage is a member of at the particular time under consideration.

A sequence of transitions between states corresponds to a {\itshape gene tree history}, a term used by both \cite{degnansalter2005} and \cite{tiankubatko2016} to denote the sequence and relative timing of a set of coalescent events that generate a gene tree. However, since we are only interested in the gene tree topology, not the specific
history that produced that topology, we will not need to keep track of the specific populations in which coalescent events 
occur. Instead, since the first coalescent event uniquely determines the topology of the gene tree that forms, we will introduce two absorbing states to the Markov chain that describes our model. These states, $CT$ and $DT$, correspond to the formation of concordant triples and discordant triples.
For example, suppose we begin in state $ddc$ and that the lineage sampled in species $B$ migrates to species $C$ at time $t_1$. 
Then we will have observed a transition of the Markov chain from state $ddc$ to state $dcc$. 
If the lineages in $C$ then coalesce at time $t_2$, then a discordant triple will eventually form. 
Therefore, the Markov chain transitions from state $dcc$ to the state $DT$. 
This sequence of transitions is depicted in Figure \ref{fig: MarkovChainExample}.

Once we have defined the Markov chain, we can compute the \emph{transition probabilities} for the reduced
IM model, the probability that the chain transitions from any state to any other state over the time interval of length $\tau$.  
Since we are particularly interested in the gene tree topology frequencies and since our chain always begins in state
$ddc$, we are most interested in the probabilities of transition from state $ddc$ to states $CT$ and $DT$. 
Notice, however, that some gene tree must form even if no coalescent event occurs in the gene flow
interval. If no coalescent event occurs over the gene flow interval, then the Markov chain ends in one of the 
states $ccc,ccd,cdc,dcc,ddc,dcd,$ or $ddd$, and all three lineages enter the population ancestral to 
$C$ and $D$. In such a case, the probability of each gene tree topology forming is equal to 1/3. 
Therefore, knowing all of the transition probabilities will allow us to determine the probability of a concordant
triple or discordant triple forming in the reduced IM model. By symmetry in the model, the two discordant
topologies form with equal probability, and so we can compute each gene tree topology probability explicitly.
However, there is one further observation that we may make to reduce the complexity of the Markov 
chain representing the reduced IM model. That is that once the chain enters either state $ccc$ or $ddd$,
the probability of each gene tree topology forming is now 1/3, and we do not need to keep track of any other
subsequent transitions. Thus, for simplicity, we can treat $ccc$ and $ddd$ as absorbing states.

\begin{figure*}[h]
    \centering
    \begin{subfigure}[b]{.5 \linewidth}
        \centering
	\includegraphics[height=5.5
cm]{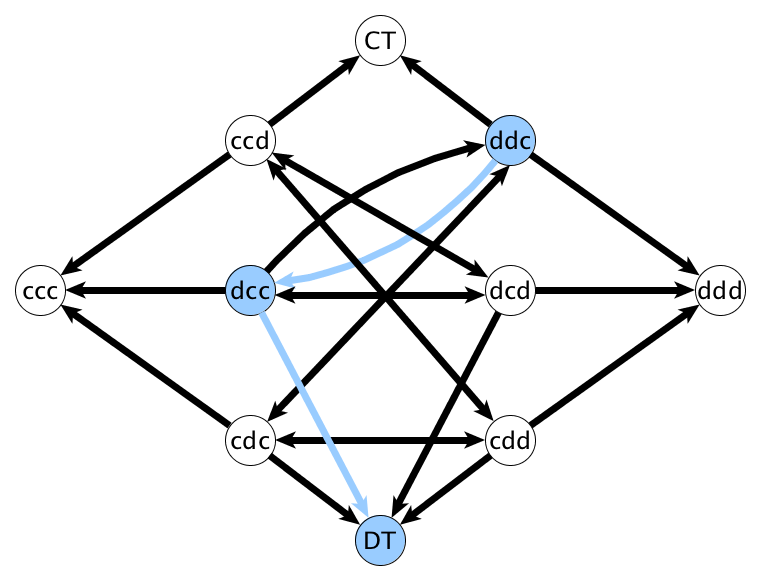}
         \caption{The Markov chain of the reduced IM model.}
         \label{fig: GeneFlowMarkovChain}
    \end{subfigure}%
    ~
    \begin{subfigure}[b]{.5 \linewidth}
        \centering
	\includegraphics[width=7cm]{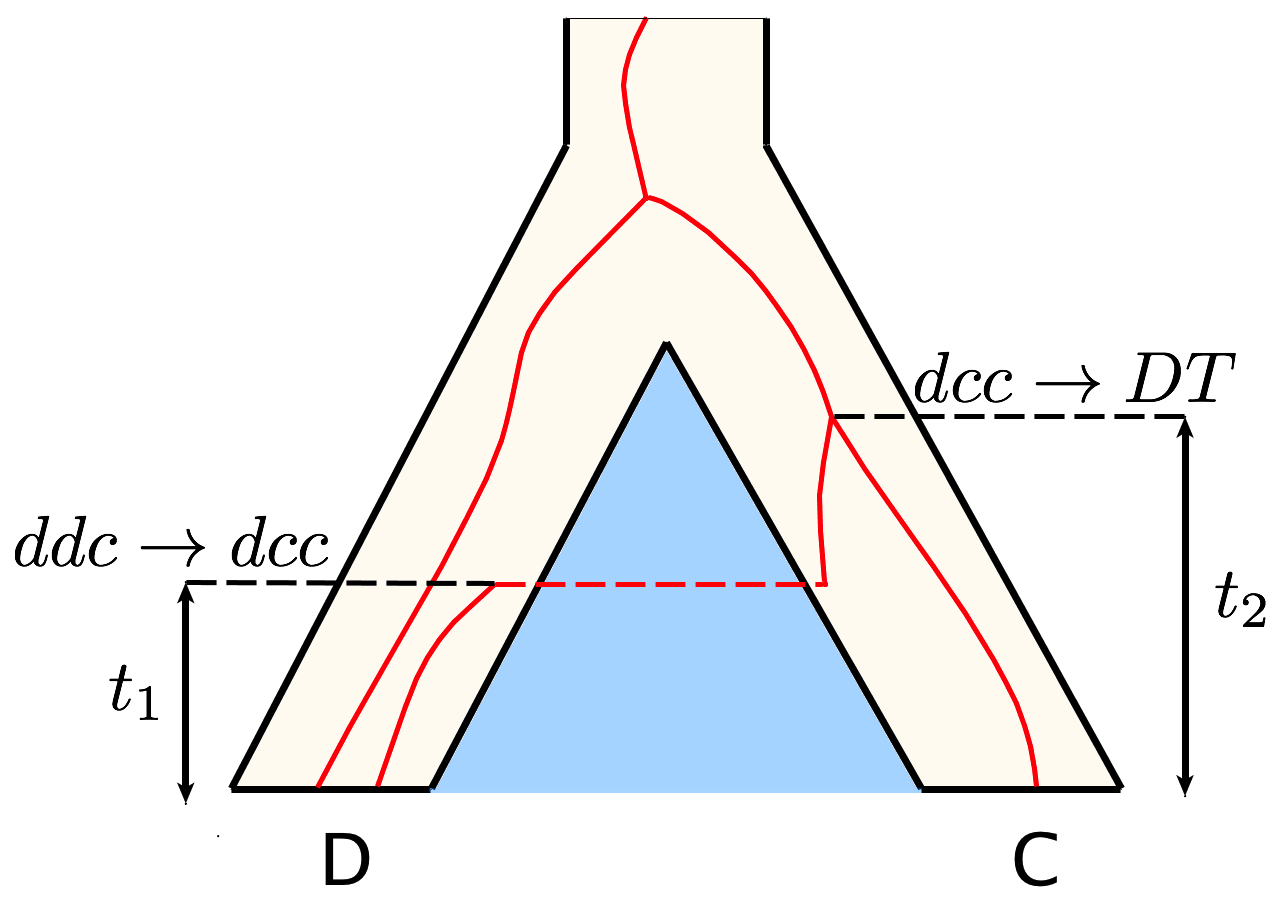}
	\vspace{0.5cm}
	\caption{The species tree of the reduced IM model.}
	\label{fig: CoalescentHistory }
    \end{subfigure}
    \caption{\small A particular sequence of coalescent and migration events in the reduced IM model
    and the corresponding path in the Markov chain. The chain starts in state $ddc$ (shaded circle in (a) and depicted 
    at the present time (i.e., at the tips of the tree) in (b)). At time $t_1$, the lineage sampled from $B$ migrates  between $D$ and $C$, 
    as depicted in (b) with a dotted horizontal line. This is represented in (a) by a transition from state $ddc$ to state $dcc$ (blue arrow). At time $t_2$,
    the lineages sampled from $B$ and $C$ coalesce in $C$,  which is a transition from $dcc$ to $DT$ (blue arrow in (b)).}
\label{fig: MarkovChainExample}
\end{figure*}

Figure \ref{fig: GeneFlowMarkovChain} depicts the Markov chain of the reduced IM model with arrows representing the possible transitions between states under the assumption that only a single event can happen at a given instant of time (i.e., two or more events do not 
happen at precisely the same instant of time).  
The instantaneous transition rates are the parameters of the model we described above, and the 
instantaneous rate matrix is given by,

$$
Q = 
\begin{blockarray}{ccccccccccc}
 & ddc & dcd & cdd  & ccd & cdc & dcc & ddd & ccc & CT & DT \\
\begin{block}{c(ccc|ccc|cccc)}
ddc  & -- & 0 & 0  & 0 & m_1 & m_1 & m_2  & 0 & \alpha_D   & 0 \\
dcd  & 0 & -- & 0  & m_1 & 0 & m_1 & m_2 & 0   & 0 &   \alpha_D   \\
cdd  & 0 & 0 & --   & m_1 & m_1 & 0 & m_2  & 0   & 0 &   \alpha_D \\
\cline{2-11}
ccd  & 0 & m_2 & m_2  & -- & 0 & 0 & 0  & m_1 &   \alpha_C   & 0 \\
cdc  & m_2 & 0 & m_2  & 0 & -- & 0 & 0  & m_1& 0   &   \alpha_C \\
dcc  & m_2 & m_2 & 0  & 0 & 0 & -- & 0  & m_1 & 0   &   \alpha_C \\
\cline{2-11}
ddd & 0 & 0 & 0 & 0 & 0 & 0 & 0 & 0 & 0& 0   \\
ccc & 0 & 0 & 0 & 0 & 0 & 0 & 0 & 0 & 0   & 0 \\
CT & 0 & 0 & 0 & 0 & 0 & 0 & 0 & 0 & 0   & 0 \\
DT & 0 & 0 & 0 & 0 & 0 & 0 & 0 & 0 & 0   & 0 \\
\end{block}
\end{blockarray}.
$$
where $\alpha_C = 2/\theta_{C}$ and $\alpha_{D} = 2/\theta_D$ are the rates of coalescence in populations $C$ and $D$, respectively.  
In the next section, we explain how to use the instantaneous rate matrix $Q$ to compute
the transition probabilities over the interval $\tau$ to derive explicit formulas for the 
probabilities of the three gene tree topologies.

 \section*{Analytic Results for the Isolation with Migration Model}

In this section, we derive an explicit formula for the probability of observing a gene 
tree concordant with the species tree in the reduced IM model described above.
As noted previously, since the other two gene tree topologies occur with equal probability,
this allows us to determine the probability of each gene tree topology. 
To obtain this formula, we must exponentiate the product of the rate matrix $Q$ and the parameter $\tau$ 
in terms of the parameters to obtain the transition probabilities \cite{karlintaylor1975}.
In general, computing the exponential of a matrix
in terms of the parameters is difficult, even for relatively small matrices.
However, the number of absorbing states in our simplified model allows us 
to write the rate matrix $Q$ as a block matrix with two zero blocks. 
This will enable us to give explicit formulas for entries of the matrix exponential, and 
consequently, to obtain an explicit formula for the probability of observing the concordant triple.

To compute the probability of observing a concordant triple in the 
reduced IM model, we will consider the two possible ways that one might form. 
One way is for the
Markov chain to transition from 
state $ddc$ to state $CT$
during the interval of gene flow of length $\tau$ in the species tree.
The other way for a concordant triple to form is for 
the Markov chain to transition from $ddc$ to 
any state other than $CT$ or $DT$ along the interval of gene flow.
As noted above, conditional on this event, the concordant triple will form with probability 1/3, since all three lineages
entered the same branch during the interval of gene flow or will enter the same branch
$CD$ at the end of the interval of gene flow. Thus, the concordant triple frequency
as a function of the rate matrix is given by,

\begin{align}
F(\alpha_C,\alpha_D,m_1,m_2,\tau) = exp(Q\tau)_{ddc,CT} + \frac{1}{3}(1 - exp(Q\tau)_{ddc,CT} - exp(Q\tau)_{ddc,DT}).
\end{align}

To expand the formula above in terms of the parameters, 
we only require two entries of the transition matrix. 
Rewriting the rate matrix $Q$ in block form 
will enable us to write the matrix exponential in terms of submatrices of $Q$. 
For the following proposition, let
$
Q = \begin{pmatrix}
M & N \\
0 &  0 \\
\end{pmatrix}
$
where \\
$$M = 
\begin{pmatrix}
 -- & 0 & 0  & 0 & m_2 & m_2  \\
 0 & -- & 0  & m_2 & 0 & m_2   \\
 0 & 0 & --   & m_2 & m_2 & 0  \\
 0 & m_1 & m_1  & -- & 0 & 0 \\
m_1 & 0 & m_1  & 0 & -- & 0 \\
m _1& m_1 & 0  & 0 & 0 & --  \\
\end{pmatrix}
\text{ and 
where } N = 
\begin{pmatrix}
 m_2  & 0 & \alpha_D   & 0 \\
 m_2  & 0   & 0 & \alpha_D   \\
 m_2 & 0   & 0 & \alpha_D \\
0  & m_2 & \alpha_C   & 0 \\
0  & m_2 & 0   & \alpha_C \\
0  & m_2 & 0   & \alpha_C \\
\end{pmatrix}.$$

 \begin{op} For any $\tau \in \mathbb{R}$ and generic choices of rate matrix parameters $\alpha_C,\alpha_D,m_1,$ and
 $m_2$,
$$exp(Q\tau) =
 \begin{pmatrix}
exp(M\tau) & ((exp(M\tau) - I)M^{-1}N) \\
0 &  0 \\
\end{pmatrix}.$$
\end{op}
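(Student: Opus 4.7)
The plan is to compute $\exp(Q\tau)$ directly from its Taylor series, taking advantage of the fact that the bottom block-row of $Q$ is zero.  First I would observe, by a one-line induction on $k\ge 1$, that
$$Q^k \;=\; \begin{pmatrix} M^k & M^{k-1}N \\ 0 & 0 \end{pmatrix}.$$
The base case $k=1$ is the definition of $Q$; for the inductive step, multiplying $Q^{k+1}=Q\cdot Q^k$ in block form gives top-left block $M\cdot M^k = M^{k+1}$, top-right block $M\cdot M^{k-1}N = M^k N$, and a zero bottom block-row.

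Next I would substitute this into $\exp(Q\tau)=\sum_{k\ge 0}(Q\tau)^k/k!$ and sum block by block, separating out the $k=0$ term, which contributes the identity.  The top-left block becomes $I+\sum_{k\ge 1}(M\tau)^k/k! = \exp(M\tau)$.  The top-right block becomes
$$\sum_{k\ge 1}\frac{\tau^k}{k!}\,M^{k-1}N \;=\; M^{-1}\!\left(\sum_{k\ge 1}\frac{(M\tau)^k}{k!}\right)\!N \;=\; (\exp(M\tau)-I)M^{-1}N,$$
where the last equality uses that $M^{-1}$ commutes with every polynomial in $M$, and hence with $\exp(M\tau)$.  Assembling the blocks yields the displayed formula.

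The only step requiring justification beyond routine calculation is the invertibility of $M$, which is the content of the ``generic'' hypothesis.  Because $M$ is the restriction of $Q$ to the transient states of an absorbing Markov chain in which every transient state has a positive-probability path to an absorbing state, $-M^{-1}$ is the classical fundamental matrix of the chain (its entries record expected sojourn times before absorption), and so $M$ is invertible whenever such paths exist.  Equivalently, $\det M$ is a nonzero polynomial in $(\alpha_C,\alpha_D,m_1,m_2)$, so invertibility holds on a Zariski-dense open subset of parameter space.  I do not anticipate any real obstacle; once the shape of $Q^k$ is observed, the proof is essentially a block-matrix computation.
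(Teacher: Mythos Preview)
Your proposal is correct and follows essentially the same route as the paper: compute $Q^k$ by a one-line induction, substitute into the exponential series, and identify the top-right block with $(\exp(M\tau)-I)M^{-1}N$. The only differences are cosmetic---you sum the series and factor out $M^{-1}$, whereas the paper expands $(\exp(M\tau)-I)M^{-1}N$ and checks it matches the series---and you add a short justification of the invertibility of $M$ under the generic hypothesis, which the paper leaves implicit.
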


\begin{proof}
Recall that the formula for the matrix exponential is given by 
$$exp(Q\tau) = \displaystyle \sum_{i=0}^\infty \frac{(Q\tau)^i}{i!} = I + Q\tau + \frac{(Q\tau)^2}{2!} + \ldots,$$
and by induction, for $i \geq 1$,
$(Q\tau)^i = 
\begin{pmatrix}
(M\tau)^i & M^{i-1}\tau^iN \\
0 &  0 \\
\end{pmatrix}. 
$
Therefore, the upper left block of $exp(Q\tau)$ is simply equal to $exp(M\tau)$.
We also have 
\begin{align*}
exp(M\tau) &= I + M\tau + \frac{M^2\tau^2}{2!} + \frac{M^3\tau^3}{3!} + \ldots \\
(exp(M\tau) - I) &=  M\tau + \frac{M^2\tau^2}{2!} + \frac{M^3\tau^3}{3!} + \ldots \\
(exp(M\tau) - I)M^{-1}N &= \tau N + \frac{M\tau^2N}{2!} + \frac{M^2\tau^3N}{3!} + \ldots, \\
\end{align*}
and this last line is exactly the upper right block of $exp(Q\tau)$.
\end{proof}

The formula for the probability of observing the concordant triple
depends only on two entries from the upper right block of the transition matrix $exp(Q\tau)$.
To compute these entries, we must first compute the $6 \times 6$ matrix $exp(M\tau)$. 
For a generic choice of nonnegative parameters, the matrix $M$ has only four distinct eigenvalues
but can be diagonalized so that $M = P\Lambda P^{-1}$,
where $\Lambda$ is the diagonal matrix of the eigenvalues of $M$.
Thus, $exp(M\tau) = Pexp(\Lambda \tau)P^{-1}$. Similarly, for 
a generic choice of nonnegative parameters we can write the entries of 
$M^{-1}$ explicitly, and simplify to write 
the concordant triple frequency as

 \begin{align*}
 F(\alpha_C,\alpha_D, & m_1,m_2,\tau) = -(1/3) (-2fm_1^2-2fm_2^2+fe_1\alpha_Dm_1+2fe_1\alpha_Dm_2+fe_2\alpha_C\alpha_D-fe_2\alpha_Cm_1+ \\
& fe_2\alpha_Dm_1+2fe_2\alpha_Dm_2-e_1\alpha_C\alpha_Dm_1+3e_1\alpha_C\alpha_Dm_2-3e_1\alpha_Cm_1m_2+e_1\alpha_Dm_1m_2+ \\
&e_2\alpha_C\alpha_Dm_1-3e_2\alpha_C\alpha_Dm_2+3e_2\alpha_Cm_1m_2-e_2\alpha_Dm_1m_2+fe_1\alpha_C\alpha_D-fe_1\alpha_Cm_1- \\
&4fm_1m_2+e_1\alpha_C^2\alpha_D-e_1\alpha_C^2m_1-e_1\alpha_C\alpha_D^2-3e_1\alpha_Cm_1^2-e_1\alpha_D^2m_1- \\
&2e_1\alpha_D^2m_2-e_1\alpha_Dm_1^2+2e_1\alpha_Dm_2^2-e_2\alpha_C^2\alpha_D+e_2\alpha_C^2m_1+e_2\alpha_C\alpha_D^2+ \\ &3e_2\alpha_Cm_1^2+e_2\alpha_D^2m_1+2e_2\alpha_D^2m_2+e_2\alpha_Dm_1^2-2e_2\alpha_Dm_2^2-3f\alpha_C\alpha_D-\\
&f\alpha_Cm_2-3f\alpha_Dm_1-6f\alpha_Dm_2)/ \\
&((\alpha_C\alpha_D+2\alpha_Cm_1+\alpha_Cm_2+\alpha_Dm_1+2\alpha_Dm_2+2m_1^2+4m_1m_2+2m_2^2)f)
\end{align*}

 where 
\begin{align*}
g &= -\frac{1}{2}(\alpha_C+\alpha_D+3m_1+3m_2) \\
f &= (\alpha_C^2-2\alpha_C\alpha_D-2\alpha_Cm_1+2\alpha_Cm_2+\alpha_D^2+2\alpha_Dm_1-2\alpha_Dm_2+m_1^2+2m_1m_2+m_2^2)^{1/2} \\
e_1 &= exp((g+f/2)\tau) \\
e_2 &= exp((g-f/2)\tau). \\
\end{align*}
 
All of the parameters of the model are nonnegative real numbers and the 
formula is valid and well-defined for a generic choice of parameters. 
The Maple file to verify this formula is contained in the Supplementary Materials.
There, we also give an example to show that this formula agrees with the output from 
the COALGF software \cite{tiankubatko2016} for the concordant triple frequency.

\subsection*{Characterizing the Gene Flow Anomaly Zone}

With this formula in hand, we can state a number of propositions about the 
concordant triple frequency and identify various regions of parameter space that 
produce anomalous behavior. Here, we define the \emph{gene flow anomaly zone} to be the region of parameter
space for which the concordant triple frequency is less than $1/3$. 
We note again that by the symmetry of the lineages $a$ and $b$ in the reduced IM model,
the two discordant triple frequencies are always given by $\frac{1}{2}(1 -  F(\alpha_C,\alpha_D,m_1,m_2,\tau))$.
The following proposition shows that this model is flexible enough to produce any concordant triple
frequency.

\begin{op} 
For any $\gamma \in (0,1)$  there exists a choice of parameters $(\alpha_C,\alpha_D,m_1,m_2,\tau) \in \mathbb{R}^5_{\geq 0}$ such that
$F(\alpha_C,\alpha_D,m_1,m_2,\tau) = \gamma$.
\end{op}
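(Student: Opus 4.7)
The plan is to apply the intermediate value theorem after exhibiting parameter choices that drive $F$ arbitrarily close to both $0$ and $1$. The function $F(\alpha_C,\alpha_D,m_1,m_2,\tau)$ is continuous on $\mathbb{R}^5_{\ge 0}$, since it is a linear combination of entries of $\exp(Q\tau)$ and the matrix exponential is a smooth function of $Q$ and $\tau$. (The closed-form expression given before the proposition is derived under genericity, but continuity extends across the locus of coincident eigenvalues.)

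To make $F$ close to $1$, I would take $m_1 = m_2 = 0$. Then no migration out of state $ddc$ is possible, and the only transition from $ddc$ is into $CT$ at rate $\alpha_D$. A direct computation gives $F = (1 - e^{-\alpha_D\tau}) + \tfrac{1}{3}e^{-\alpha_D\tau}$, which tends to $1$ as $\alpha_D\tau \to \infty$.

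Making $F$ close to $0$ is the main obstacle, as it requires a careful balancing of rates so that the concordant coalescent event is outcompeted by a sequence of migration followed by discordant coalescent. I would use the regime $m_2 = 0$, $\alpha_D = \epsilon$, $m_1 = 1/\epsilon$, $\alpha_C = 1/\epsilon^2$, with $\tau$ fixed at any positive value, and let $\epsilon \to 0$. From state $ddc$ the total exit rate $\alpha_D + 2m_1$ diverges, so with probability tending to $1$ the chain exits $ddc$ within time $\tau$; moreover the exit is a migration into $cdc$ or $dcc$ with probability $2m_1/(2m_1+\alpha_D) \to 1$ rather than a coalescence into $CT$. From the resulting state (say $cdc$), the only transitions with $m_2 = 0$ are to $ccc$ at rate $m_1$ and to $DT$ at rate $\alpha_C$; since $\alpha_C/(\alpha_C + m_1) \to 1$ and the exit rate diverges, the chain reaches $DT$ within the remaining time with probability tending to $1$. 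Hence $P(DT) \to 1$, which via the formula for $F$ forces $F \to 0$.

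Finally, given any $\gamma \in (0,1)$, I would choose parameters with $F > \gamma$ and parameters with $F < \gamma$ as above, connect them by a straight-line path in $\mathbb{R}^5_{\ge 0}$, and invoke continuity of $F$ together with the intermediate value theorem to produce a parameter point on the path at which $F = \gamma$.
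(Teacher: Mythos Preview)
Your proof is correct and follows the same overall strategy as the paper: exhibit parameter regimes where $F$ approaches $0$ and $1$, then invoke continuity and the intermediate value theorem. The paper's proof is very terse --- it simply records that for fixed $m_1>0$ with $\alpha_C=\tau$, $\alpha_D=m_2=1/\tau$ one has $\lim_{\tau\to\infty}F=0$, and for fixed $\alpha_C,\alpha_D>0$ with $m_1=m_2=1/\tau$ one has $\lim_{\tau\to\infty}F=1$ --- leaving continuity and the IVT implicit.

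The main difference is methodological: the paper's limits are presumably verified by plugging into the closed-form expression for $F$ derived just above the proposition, whereas you argue directly from the Markov chain dynamics (competing exponential clocks and embedded jump probabilities). Your approach has the advantage of being self-contained and giving clear probabilistic intuition for \emph{why} the limits hold; it would survive even if the explicit formula were unavailable. The paper's approach is shorter once the formula is in hand. Your explicit handling of continuity across the non-generic locus and of the convexity of $\mathbb{R}^5_{\ge 0}$ for the connecting path are also welcome details that the paper omits.
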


\begin{proof}
For fixed $m_1 >0$, setting $\alpha_C=\tau$ and $\alpha_D = m_2 = 1/\tau$, $\displaystyle \lim_{\tau\to \infty} F(\alpha_C,\alpha_D,m_1,m_2,\tau) = 0$. 
Similarly, for fixed $\alpha_C,\alpha_D>0$, if $m_1 = m_2 = 1/\tau$, then $\displaystyle \lim_{\tau\to \infty} F(\alpha_C,\alpha_D,m_1,m_2,\tau) = 1$.
\end{proof}

Intuitively, discordant triples are very likely to form when we allow unidirectional
gene flow from population $D$ to population $C$, reduce the probability of coalescence in 
population $D$, and then impose a very high rate of 
coalescence in $C$. Since lineages $a$ and $b$ are unlikely to coalesce in $D$,
given enough time, eventually one of these lineages will 
flow into population $C$ and immediately coalesce with lineage $c$ to produce
a discordant triple. While this scenario gives insight into what conditions might produce
anomalous behavior, it is not actually a requirement that the 
rate of gene flow from $D$ to $C$ be greater than the rate from $C$ to $D$ (i.e., that $m_1 > m_2$).
In \cite{tiankubatko2016}, the authors analyzed a model with equal rates of gene flow,
(i.e., where $m_1 = m_2$) and showed the concordant triple frequency could be 
made very close to $1/3$. In fact, the following proposition shows that 
even with this restriction, the concordant triple frequency can be made to approach $1/9$.

\begin{op} 
For any $\gamma \in (\frac{1}{9},1)$  there exists a choice of parameters $(\alpha_C,\alpha_D,m,m,\tau) \in \mathbb{R}^5_{\geq 0}$ 
such that
$F(\alpha_C,\alpha_D,m,m,\tau) = \gamma$.
Moreover, for any choice of parameters $(\alpha_C,\alpha_D,m,m,\tau) \in \mathbb{R}^5_{\geq 0}$, 
${1}/{9} \leq F(\alpha_C,\alpha_D,m,m,\tau) \leq 1$. 
\end{op}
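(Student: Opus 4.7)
The plan is to substitute $m_1 = m_2 = m$ into the explicit formula for $F$ derived above. Under this substitution, $f$ simplifies to $\sqrt{(\alpha_C - \alpha_D)^2 + 4m^2}$ and $g$ to $-(\alpha_C + \alpha_D + 6m)/2$. In particular $g \pm f/2 < 0$, so $e_1, e_2 \to 0$ as $\tau \to \infty$, and $F$ tends to the rational function
\begin{equation*}
F_\infty(\alpha_C, \alpha_D, m) \;=\; \frac{3\alpha_C\alpha_D + \alpha_C m + 9\alpha_D m + 8m^2}{3(\alpha_C\alpha_D + 3\alpha_C m + 3\alpha_D m + 8m^2)},
\end{equation*}
obtained by collecting the terms in the numerator of $F$ that do not involve $e_1$ or $e_2$ and dividing through by $f$.

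For achievability of each $\gamma \in (1/9, 1)$, I exhibit two explicit parameter families and invoke continuity of $F$ together with the intermediate value theorem along a continuous path in parameter space joining them. Setting $m \to 0$ with $\alpha_D > 0$ fixed and $\tau \to \infty$ forces $a$ and $b$ to coalesce in $D$ with probability tending to $1$, so $F \to 1$ (consistent with $F_\infty \to 1$ as $m \to 0$ in the displayed expression). Sending $\tau \to \infty$ first, then $\alpha_D \to 0$ and $\alpha_C/m \to \infty$ with $m$ fixed, a direct calculation from the displayed expression gives $F_\infty \to 1/9$, so $F$ itself can be pushed arbitrarily close to $1/9$.

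For the lower bound $F \geq 1/9$, the two boundary values in $\tau$ are handled by elementary algebra: $F(\tau = 0) = 1/3$ by direct substitution, and $F_\infty \geq 1/9$ is equivalent to the non-negativity $24\alpha_C\alpha_D + 72\alpha_D m + 48m^2 \geq 0$. For intermediate $\tau$, my plan is to clear denominators in $9F - 1$ to obtain an expression of the form $P_0 + P_1(e_1 + e_2) + P_2 (e_1 - e_2)/f$, with $P_0, P_1, P_2$ polynomials in the non-negative parameters $\alpha_C, \alpha_D, m$, and then verify non-negativity term by term using the elementary bounds $0 \leq e_2 \leq e_1 \leq 1$ along with $\tau e_2 \leq (e_1 - e_2)/f \leq \tau e_1$, which follow from applying the mean value theorem to $x \mapsto \exp((g + x/2)\tau)$ on $[-f, f]$.

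The main obstacle is this algebraic verification, because $F$ need not be monotone in $\tau$ and the reduced formula contains terms of mixed signs; it may be necessary to split into cases by the sign of $F_\infty - 1/3$, equivalently of $\alpha_C m - \alpha_D(\alpha_C + 3m)$, since this controls whether $F$ approaches its stationary value from above or from below. Should a clean hand decomposition prove elusive, the Maple worksheet supplied with the paper performs the full symbolic verification, while the probabilistic content is transparent: the $1/9$ bound is the CT probability of the limiting Markov chain in which $a, b$ cannot coalesce in $D$, any two lineages meeting in $C$ coalesce immediately, and symmetric migration balances the chain.
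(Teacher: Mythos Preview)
Your approach to achievability (exhibit parameter families where $F$ approaches $1$ and $1/9$, then invoke continuity) is essentially the paper's, though the paper takes slightly different one-parameter families: $\alpha_C = \tau$, $\alpha_D = 1/\tau$ with $m$ fixed to reach $1/9$, and $m = 1/\tau$ with $\alpha_C,\alpha_D$ fixed to reach $1$.

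For the lower bound $F \geq 1/9$, however, your plan diverges sharply from the paper's and leaves a genuine gap. You propose to attack $9F - 1 \geq 0$ by algebraic manipulation of the closed-form expression, clearing denominators and checking signs term by term, possibly with a case split on the sign of $F_\infty - 1/3$. You yourself flag this as the ``main obstacle'' and do not actually carry it out. Your fallback appeal to the supplementary Maple file is misplaced: that file verifies the derivation of the formula for $F$, not the inequality $F \geq 1/9$. So as written, the lower bound is not established.

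The paper bypasses all of this algebra with a short probabilistic argument at the level of the Markov chain, which you nearly state in your final sentence but do not actually use as the proof. From the starting state $ddc$, the outgoing rates are $m$ to $ddd$, $\alpha_D$ to $CT$, and $m$ each to $cdc$ and $dcc$; hence, conditional on a jump occurring in $[0,\tau]$, the first jump lands in $\{ddd, CT\}$ with probability $(m+\alpha_D)/(3m+\alpha_D) \geq 1/3$. From either $ddd$ or $CT$ (and likewise from $ddc$ itself if no jump occurs) the concordant topology forms with probability at least $1/3$. Bounding the contribution from the complementary event below by zero yields $F \geq (1/3)(1/3) = 1/9$ immediately, with no recourse to the explicit formula. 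This first-jump argument is the missing idea; your algebraic route might eventually succeed, but it is far harder than necessary and remains incomplete as presented.
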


\begin{proof}
From the initial state $ddc$, the sum of the instantaneous rates of change to the states $ddd$ and 
$CT$ is greater than ${1}/{3}$ the total rate of change to all other states. Once the system is in 
either $ddd$ or $CT$, the probability of a concordant triple forming is at least $1/3$. Therefore,
$F(\alpha_C,\alpha_D,m,m,\tau) \geq \frac{1}{9}$.

For fixed $m_1 = m_2> 0$, setting $\alpha_C=\tau$ and $\alpha_D = 1/\tau$, $\displaystyle \lim_{\tau\to \infty} F(\alpha_C,\alpha_D,m_1,m_2,\tau) = 1/9$. 
Similarly, for fixed $\alpha_C$ and $\alpha_D>0$, if $m_1 = m_2 = 1/\tau$ then $\displaystyle \lim_{\tau\to \infty} F(\alpha_C,\alpha_D,m_1,m_2,\tau) = 1$.
\end{proof}

\begin{figure}
\begin{center}
\includegraphics[width=9cm]{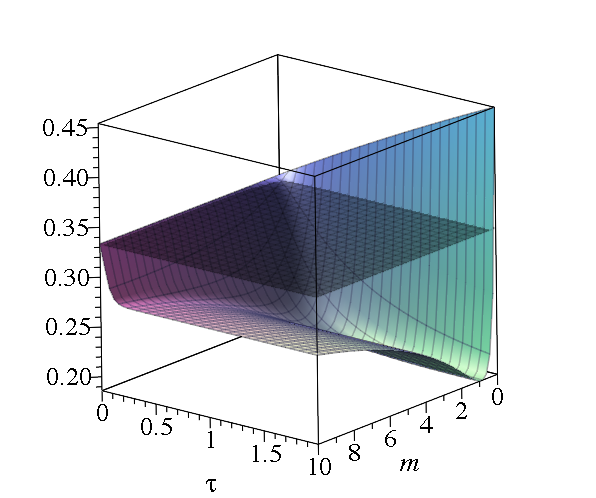}
\caption{A plot of the concordant triple frequency from the reduced IM model
as a function of the length of the
interval of gene flow ($\tau$) and the symmetric rate of gene flow ($m = m_1 = m_2$).
The other parameters of the model are fixed at $\alpha_C = 10$ and $\alpha_D = 0.1$.
Note that everywhere the surface is below the shadowed plane 
is a region of parameter space producing anomalous gene trees. }
\label{fig: SurfacePlot}
\end{center}
\end{figure}

Figure \ref{fig: SurfacePlot} shows a region of parameter space with $m_1 = m_2$
which produces anomalous gene trees and where the concordant triple frequency
approaches the limiting value of $1/9$. Whether or not the gene flow rates are equal, 
one essential feature for producing anomalous behavior is that
the rate of coalescence in population $C$ must be greater than that in population $D$. The following proposition makes 
this relationship between the coalescence parameters formal.

\begin{op} 
\label{op: c greater}
For any fixed $\alpha_C',\alpha_D'>0$, there exists an $m_1', m_2',\tau' \in \mathbb{R}_{\geq 0}$ such that 
$F(\alpha_C',\alpha_D',m_1',m_2',t') < 1/3$ if and only if $\alpha_C' > \alpha_D'$. 
\end{op}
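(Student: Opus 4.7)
My plan is to treat the two directions of the biconditional separately, and first to recast $F<1/3$ in a cleaner form. If $p_{xy}$ denotes the probability that pair $(x,y)$ is the first to coalesce in the interval of length $\tau$, then $\exp(Q\tau)_{ddc,CT}=p_{ab}$ and $\exp(Q\tau)_{ddc,DT}=p_{ac}+p_{bc}$, so the defining formula for $F$ rewrites as
\[ F-\tfrac{1}{3}\;=\;\tfrac{1}{3}\bigl(2p_{ab}-p_{ac}-p_{bc}\bigr). \]
The exchangeability of lineages $a$ and $b$ (both start in $D$ with identical migration parameters) forces $p_{ac}=p_{bc}$, so the proposition reduces to showing that there exist $m_1,m_2,\tau$ with $p_{ab}<p_{ac}$ if and only if $\alpha_C'>\alpha_D'$.

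For the forward direction I would exhibit parameters. Setting $m_2'=0$ and letting $\tau\to\infty$, the exponentials $e_1$ and $e_2$ vanish: their exponents $g\pm f/2$ are real and strictly negative, as one checks from the identity $f^2=[(\alpha_C-\alpha_D)-(m_1-m_2)]^2+4m_1m_2$ together with the easy bound $f\leq\alpha_C+\alpha_D+m_1+m_2<\alpha_C+\alpha_D+3(m_1+m_2)=-2g$. The explicit formula then collapses to
\[ \lim_{\tau\to\infty}F(\alpha_C',\alpha_D',m_1,0,\tau)\;=\;\frac{2m_1^2+3\alpha_D'\,m_1+3\alpha_C'\alpha_D'}{3\bigl(2m_1^2+(2\alpha_C'+\alpha_D')m_1+\alpha_C'\alpha_D'\bigr)}, \]
which is less than $1/3$ precisely when $m_1>\alpha_C'\alpha_D'/(\alpha_C'-\alpha_D')$. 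The latter admits a solution exactly because $\alpha_C'>\alpha_D'$; picking any such $m_1'$ and then a sufficiently large finite $\tau'$ by continuity of $F$ in $\tau$ produces the required anomaly.

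The converse direction, that $\alpha_C'\leq\alpha_D'$ forces $F\geq 1/3$ uniformly in $m_1,m_2,\tau$, is the main obstacle, since unlike the forward direction it requires a bound for every choice of parameters. My primary plan is algebraic: write $F-1/3$ as a single rational expression with manifestly positive denominator, and show the numerator decomposes as $(\alpha_D'-\alpha_C')\cdot P_1+P_2$, where $P_1,P_2$ are polynomials with nonnegative coefficients in the expressions $(1-e_1)$, $(1-e_2)$, $(e_1-e_2)^2$ and the migration rates $m_1,m_2$. The limit calculation above is suggestive: at $\tau=\infty$ the numerator of $F-1/3$ simplifies to $2(\alpha_D'-\alpha_C')m_1+2\alpha_C'\alpha_D'+4\alpha_D'\,m_2$, which has the right sign exactly when $\alpha_C'\leq\alpha_D'$. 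The task is to show that the same positivity structure persists at finite $\tau$, a symbolic decomposition well suited to the Maple scripts already accompanying the paper.

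A probabilistic alternative would condition on the migration trajectories $(a(t),b(t),c(t))$ and use $p_{xy}(\tau)=\mathbb{E}\bigl[\int_0^\tau r_{xy}(t)\,e^{-\int_0^t R(s)\,ds}\,dt\bigr]$ with $r_{xy}(t)=\alpha_{X(t)}\mathbf{1}[x(t)=y(t)]$; the $a\leftrightarrow b$ symmetry of $R=r_{ab}+r_{ac}+r_{bc}$ lets one write $p_{ab}-p_{ac}$ as the expectation of an integrand that weights the symmetric configurations $ddc,ccd$ positively and the asymmetric configurations $dcd,cdc,dcc,cdd$ negatively. Since $a,b$ start coincident in $D$ and share migration dynamics, one expects the symmetric configurations to dominate for $\alpha_C\leq\alpha_D$; but the integrand is of mixed sign across states, so making the pathwise comparison rigorous seems more delicate than the direct algebraic route, and I would fall back to the symbolic computation.
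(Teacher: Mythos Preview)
Your forward direction is essentially the paper's: both set $m_2=0$, send $\tau\to\infty$, and read off the threshold $m_1>\alpha_C'\alpha_D'/(\alpha_C'-\alpha_D')$ from the limiting expression (the paper writes the limit condition as $2\alpha_C\alpha_D+2\alpha_D m_1+4\alpha_D m_2-2\alpha_C m_1<0$, which specializes to the same thing).

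For the converse, however, you miss the paper's key idea and leave yourself with a much harder problem. The paper observes three calculus facts about $F$ as a function of $\tau$ alone: $F(0)=1/3$, $\partial_\tau F(0)=2\alpha_D/3>0$, and $\partial_\tau F$ has \emph{at most one zero}. Together these force $F(\tau)\geq 1/3$ for all $\tau$ unless $\lim_{\tau\to\infty}F<1/3$: since $F$ starts at $1/3$ and initially increases, to drop below $1/3$ it must first have a local maximum, and with at most one critical point it can never come back up. This reduces the converse to the same limit computation you already did for the forward direction, and the limiting inequality $2\alpha_C\alpha_D+2m_1(\alpha_D-\alpha_C)+4\alpha_D m_2<0$ manifestly has no nonnegative solution $(m_1,m_2)$ when $\alpha_C\leq\alpha_D$.

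Your proposed route---a global sign decomposition of the numerator of $F-1/3$ in the variables $1-e_1$, $1-e_2$, $(e_1-e_2)^2$, $m_1$, $m_2$---is not obviously wrong, but you have not actually produced the decomposition, and the transcendental dependence of $e_1,e_2$ on the parameters through $f$ and $g$ makes it far from a routine symbolic exercise. The paper's single-critical-point argument sidesteps all of this: it converts a statement about \emph{all} $\tau$ into one about $\tau=\infty$, where $e_1=e_2=0$ and the expression is polynomial. That is the missing idea in your proposal.
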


\begin{proof} For all $\alpha_C,\alpha_D,m_1,m_2 >0$, $F(\alpha_C,\alpha_D,m_1,m_2,0) = 1/3$ and $\frac{\partial F}{\partial \tau}(0) = 2\alpha_D/3 > 0$. There is at most one point for which $\frac{\partial F}{\partial \tau}(s) = 0$, which implies that 
if there exists $m_1',m_2',t'$ such that $F(\alpha_C',\alpha_D',m_1',m_2',\tau') < 1/3$, it must be that 
$\displaystyle \lim_{\tau\to \infty} F(\alpha_C',\alpha_D',m_1',m_2',\tau) < 1/3$.
Evaluating the limit and solving, we see that $\displaystyle \lim_{\tau\to \infty} F(\alpha_C,\alpha_D,m_1,m_2,\tau)<1/3$ if and only if
$2\alpha_C\alpha_D + 2\alpha_Dm_1 + 4\alpha_Dm_2 - 2\alpha_Cm_1 < 0$. 
\end{proof}

\begin{cor}
For any fixed $\alpha_C',\alpha_D'>0$, there exists an $m', \tau' \in \mathbb{R}_{\geq 0}$ such that 
$F(\alpha_C',\alpha_D',m',m',\tau') < 1/3$ if and only if $\alpha_C' > 3\alpha_D'$. 
\end{cor}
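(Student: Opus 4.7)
The plan is to reduce this corollary to Proposition \ref{op: c greater} by specializing the characterizing inequality to the symmetric case $m_1 = m_2 = m$. The proof of Proposition \ref{op: c greater} establishes that for fixed $\alpha_C', \alpha_D' > 0$, the existence of $m_1', m_2', \tau' \in \mathbb{R}_{\geq 0}$ with $F(\alpha_C', \alpha_D', m_1', m_2', \tau') < 1/3$ is equivalent to
$$2\alpha_C'\alpha_D' + 2\alpha_D' m_1' + 4\alpha_D' m_2' - 2\alpha_C' m_1' < 0.$$
The intermediate steps of that argument — namely $F(\cdot,0) = 1/3$, $\partial_\tau F(0) = 2\alpha_D'/3 > 0$, and the uniqueness of any interior critical point in $\tau$ — apply verbatim in the symmetric subcase, so it suffices to specialize this inequality to $m_1' = m_2' = m$.

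Substituting $m_1' = m_2' = m$ and dividing by $2$, the condition becomes
$$\alpha_C' \alpha_D' + 3\alpha_D' m - \alpha_C' m < 0, \qquad \text{i.e.,} \qquad \alpha_C' \alpha_D' < (\alpha_C' - 3\alpha_D')\, m.$$
Since $\alpha_C' \alpha_D' > 0$ by hypothesis, the right-hand side must be strictly positive, which forces $\alpha_C' > 3\alpha_D'$ as a necessary condition. Conversely, if $\alpha_C' > 3\alpha_D'$, then any $m' > \alpha_C'\alpha_D'/(\alpha_C' - 3\alpha_D')$ satisfies the strict inequality, and Proposition \ref{op: c greater} supplies the required $\tau'$.

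There is essentially no obstacle beyond careful bookkeeping: the substantive analytic work — tracking $F$ as a function of $\tau$ and evaluating its limit at infinity — has already been done in Proposition \ref{op: c greater}, and the corollary reduces to a one-line algebraic manipulation followed by a sign analysis. The only subtlety worth flagging is that the necessity direction relies crucially on $\alpha_C' \alpha_D' > 0$; without this, the implication from positivity of the right-hand side to $\alpha_C' > 3\alpha_D'$ would fail.
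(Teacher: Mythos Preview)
Your proof is correct and follows exactly the approach intended by the paper: the corollary is stated there without proof, as an immediate consequence of Proposition~\ref{op: c greater}, and your argument --- specializing the limiting inequality $2\alpha_C\alpha_D + 2\alpha_D m_1 + 4\alpha_D m_2 - 2\alpha_C m_1 < 0$ to $m_1 = m_2 = m$ and analyzing the resulting sign condition --- is precisely how that specialization goes through.
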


\section*{Species Tree Estimation in the Presence of Gene Flow}

There are several phylogenetic estimation methods that reconstruct species
tree topologies under the coalescent model using the observed frequencies of estimated gene tree topologies.
These so-called \emph{summary methods} (or sometimes, {\itshape summary statistics methods} \cite{liuetal2009}) take advantage of the fact that for 
the coalescent model on an unrooted quartet tree, the most probable gene tree topology 
is that concordant with the species tree. By choosing an outgroup, these methods can be applied
to infer the rooted topology of a 3-leaf species tree under the coalecsent model. 
However, as we have just shown, in the presence of gene flow, the most common gene tree
topology may not be that concordant with the species tree topology. As was also noted
in \cite{solislemusetal2016}, the summary methods will be positively misleading in such cases. 
By contrast, in this section we show that the method of SVDQuartets remains theoretically valid
in the presence of gene flow. We offer here a brief review of the method which suggests why it 
should be effective on data produced by the IM model with gene flow described in this paper. 

\subsection*{Theoretical Justification for the Performance of SVDQuartets}\label{subsec:theory.svdq}

The method
of SVDQuartets is applied to sequence data generated under the multispecies coalescent in order
to infer the unrooted species tree topology \cite{chifmankubatko2014}. The method works by using singular value decomposition
to infer the unrooted topology of the induced 4-leaf subtrees of the species tree, 
also called \emph{quartets}, which are then assembled to reconstruct
the unrooted species tree. 
The method of inferring quartets is statistically consistent and can be combined with 
any exact algorithm for reconstructing species trees from quartets to yield
a statistically consistent method of inferring species trees under the coalescent 
model.

In this section we offer some justification for why SVDQuartets
is still an effective method for inferring the species tree under the three-species 
IM model with gene flow.
By selecting an outgroup, we can use SVDQuartets to infer the rooted
topology of a 3-leaf species tree under the coalescent model.
With the outgroup added, the 4-leaf species tree of the model is an equidistant, rooted caterpillar,
and we label the cherry of this tree by $A$ and $B$.
The symmetry in the cherry of the species tree implies that $p_{i_1i_2i_3i_4} = p_{i_2i_1i_3i_4}$, 
where $p_{i_1i_2i_3i_4}$ is the probability of observing the DNA site pattern $i_1i_2i_3i_4$, $i_j \in \{A, C, G, T\}$, $j = 1, 2, 3, 4$, at the tips
of the species tree under the coalescent model.
For the purposes of this paper, it will suffice to know that 
producing a probability distribution that satisfies these linear relationships is a sufficient but not necessary
condition for SVDQuartets to correctly infer the unrooted topology of the 4-leaf species tree with unlimited data.
We omit the full details of the SVDQuartets method here and refer the reader instead to
\cite{chifmankubatko2015}. 
Because of the
outgroup, the rooted species tree can be assumed to be a rooted caterpillar, and so 
the unrooted topology uniquely determines the rooted 3-leaf species tree.
Our claim is that the expected site pattern probabilities from the three-species
IM model with gene flow also satisfy these same linear relationships, 
implying that SVDQuartets is still valid in the presence of gene flow.

That the three-species IM model with gene flow satisfies these linear relationships 
follows simply by the symmetry between $A$ and $B$ in the model in the case where the 
rates of gene flow and coalescence between $A$ and $B$ are the same (i.e., $m_3 = m_4$ and $\theta_A =\theta_B$). 
 If $m_3 \not = m_4$ or $\theta_A \not = \theta_B$, 
 then the argument is only slightly more involved. 
We divide the gene trees into two sets, those that involve a coalescent event
in either $A$ or $B$ and those that do not.
The first set of gene trees are all concordant with the species tree. By the symmetry
in the cherry of each concordant gene tree,
the site pattern probabilities produced by each of these gene trees satisfies
$p_{i_1i_2i_3i_4} = p_{i_2i_1i_3i_4}$, and so must the sum of the 
site pattern probabilities from all of these gene trees.
For the second set of gene trees, each of the two lineages
$a$ and $b$ both enter the population $AB$. While the site pattern probabilities from
 each individual gene tree will not necessarily satisfy these relationships, the indistinguishability
 of the lineages $a$ and $b$ upon entering population $AB$ 
guarantees that the sum of the site pattern probabilities from all of these gene trees will
satisfy $p_{i_1i_2i_3i_4} = p_{i_2i_1i_3i_4}$, regardless of the values of $m_1$ and $m_2$.
Since the site pattern probabilities at the tips of the species tree under the three-species IM model are the sum of the 
site pattern probabilities from these two sets of gene trees, they must also satisfy $p_{i_1i_2i_3i_4} = p_{i_2i_1i_3i_4}$.  To formally prove the validity of SVDQuartets 
under this model requires a more detailed justification of these relationships and verification that the discordant topologies do not satisfy these relationships. Further, the result
can be extended to the case in which the species tree does not satisfy the molecular clock, following the 
argument in \cite{longkubatko2017}.

\subsection*{Performance of Reconstruction Methods in the Presence of Gene Flow}
Given the result above that indicates that SVDQuartets is theoretically valid in the presence of gene flow, we assess the performance of the 
method under various conditions using simulation.  For all of the simulations described here, the four-taxon model tree is shown in Figure \ref{fig:sim.model},
with speciation times given by $\tau_1 = 1.0$, $\tau_2 = 3.0$, and $\tau_3 = 10.0$, in coalescent units. The choice of a large 
interval between the second and third speciation events was made so that the gene tree topology 
distribution would mimic that of the three-taxon case described above.  Because the interval $\tau_3-\tau_2$
is long, we expect lineages $a$, $b$, and $c$ to coalesce completely within this interval with high probability. 
Thus the tree can be viewed as being rooted by taxon $R$.  We use this model tree to 
carry out two distinct sets of simulation studies that differ in the 
types of input data used. We describe each simulation set-up and the corresponding results separately below.

\begin{figure}
\begin{center}
\includegraphics[width=3.0in]{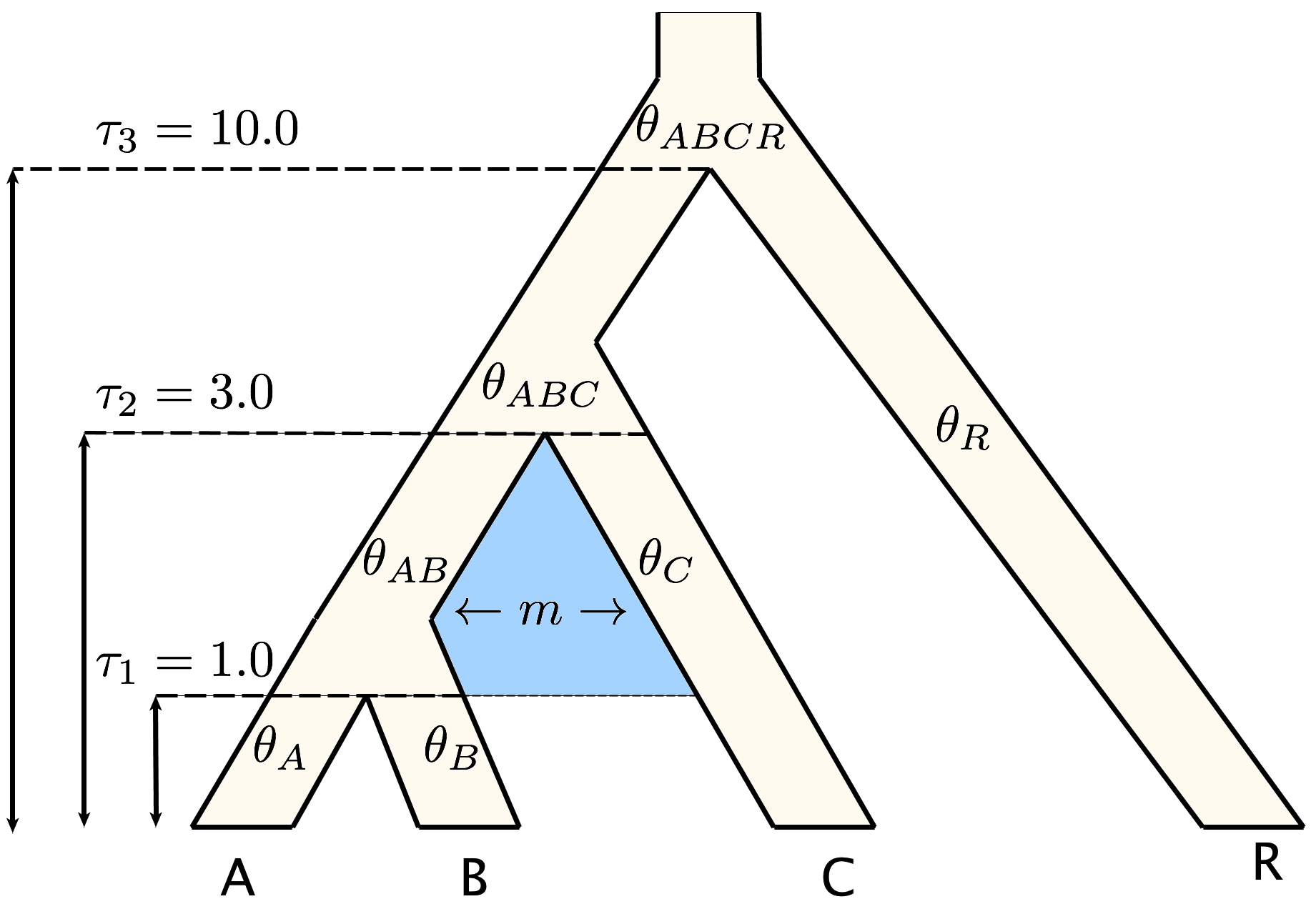}
\caption{Model tree for the simulation studies.}\label{fig:sim.model}
\end{center}
\end{figure}

\subsubsection*{Coalescent Independent Sites.}
Our first simulation study involves the generation of data consisting of {\itshape coalescent 
independent sites}.  Coalescent independent sites are sampled by first simulating some number, $n$, 
of gene trees under the model in Figure \ref{fig:sim.model}. For each of the $n$ gene trees,
a single site is generated according to one of the standard nucleotide substitution models. This results in a dataset consisting of 
$n$ sites that are conditionally independent given the species tree.  This simulation setting is designed to mimic SNP data, except that
there is no requirement that sites are variable or that they include only two alleles.  In other words, any site pattern can be included in the
dataset, but all sites are independently generated under the model, conditional on the species tree.
Within the setting of coalescent independent sites, we consider three distinct sets of simulations designed to assess the performance of SVDQuartets
under different scenarios.  The first setting involves conditions under which anomalous gene trees exist due to the presence of gene flow and
 varying $\theta$ parameters.  
 In particular, we fix $\theta_A = \theta_B  = \theta_C = \theta_R = \theta_{ABC} = \theta_{ABCR} = 2.0$ and set $\theta_{AB} = 20.0$.  This amounts to slowing
 the rate of coalescence in population $AB$, which results in migration out of this population before coalescence of lineages $a$ and $b$ with fairly high probability, 
 leading to an increase in the proportion of discordant gene trees.  We assume no gene flow between species $A$ and $B$ (i.e., $m_3 = m_4 = 0$), and allow symmetric gene
 flow between populations $AB$ and $C$ at rate $m$ (i.e., we set $m_1=m_2=m$). We vary $m$ to assess performance of SVDQuartets at varying levels of
 gene flow.  Gene genealogies are simulated using the software \texttt{ms} \cite{hudson2002} with the command line:  \texttt{./ms 4 n -t 2.0 -T -I 4 1 1 1 1 -ej 0.5 1 2 -ej 1.5 2 3 -ej 5.0 3 4 -em 0.5 2 3 } $\mathbf{m/2}$ \texttt{ -em 0.5 3 2 } $\mathbf{m/2}$ \texttt{ -en 0.5 2 20.0 > treefile}, where $m=0, 0.4, 0.8, 1.2, 1.6, 2.0, 4.0,$ or $8.0$.  
 In the units used by $\texttt{ms}$, $m = 2Np$, where $p$ is the fraction of population $AB$  made up of migrants from population $C$  at each generation (and similarly for $C$ and $AB$
 since we assume symmetric migration) and $N$ is the effective population size.
 Once the gene genealogies are generated, a single site is evolved along each genealogy using the software Seq-Gen \cite{rambautgrassly1997} under either the 
 JC69 model \cite{jukescantor1969} (command: \texttt{./seq-gen -s 0.025 -q -mHKY}) or the GTR+I+G model \cite{tavare1986} (command: \texttt{./seq-gen -s 0.025 -q -mGTR -r 1.0 0.2 10.0 0.75 3.2 1.6 -f 0.15 0.35 0.15 0.35 -i 0.2}) for $$n = 10,000; 50,000; 100,000; 200,000;400,000; 600,000; 800,000;1,000,000.$$  
 At each setting for $m$ and $n$ and for each substitution model, 100 data sets are simulated, and the proportion of data sets for which the correct four-taxon tree is estimated 
 is recorded.
 
The next simulation setting we consider within the context of coalescent independent sites is designed to assess the impact of changing the $\theta$ parameters in 
the presence of moderate levels of gene flow. 
Holding $m$ fixed at 0.8, we considered $\alpha_C = 5.0, 10.0,$ and $20.0$ (recall that $\theta_C = 2/\alpha_C$).  
The corresponding command in \texttt{ms} was
\texttt{./ms 3 n -t 2.0 -T -I 3 1 1 1 -ej 0.5 1 2 -ej 1.5 2 3 -n 3 } $\mathbf{w}$ \texttt{ -em 0.5 2 3 0.4 -em 0.5 3 2 0.4 -en 0.5 2 } $\mathbf{y}$ \texttt{ > treefile}, where in this command, the parameters $\mathbf{(w,y)}$ took
values (0.1, 50.0), (0.05, 200.0), and (0.025, 800.0). For comparison, we include the case where $\alpha_C = 1.0$, which corresponds to the Simulation 1 setting for which $m=0.8$.
  All other settings were the same as in the previous simulation.  Finally, we considered a collection of settings for which there are no 
anomalous gene trees, but for which all three gene trees have somewhat similar probabilities.   The only change from the first simulation study is to set $\theta_{AB} = 4.0$ (instead of 20.0).

 \subsubsection*{Results of Coalescent Independent Sites Simulations.}
The results of the first set of simulation studies using coalescent independent sites are shown in Figure \ref{fig:cis.sim}a  for data simulated with both the JC69 model (solid lines) and the GTR+I+G model (dotted lines).  
 We first note that, in general, as the number of sites increases, the accuracy of SVDQuartets increases
as well, with accuracy at or near 100\% for 1,000,000 sites. While this is true for most levels of gene flow ($m$), 
even for the largest $m$ considered ($m=8.0$), it 
does not appear to be true for the intermediate value $m=0.4$, 
where the accuracy appears to fluctuate around $1/3$ across all sample sizes.

\begin{figure}
\includegraphics[width = 7.5in]{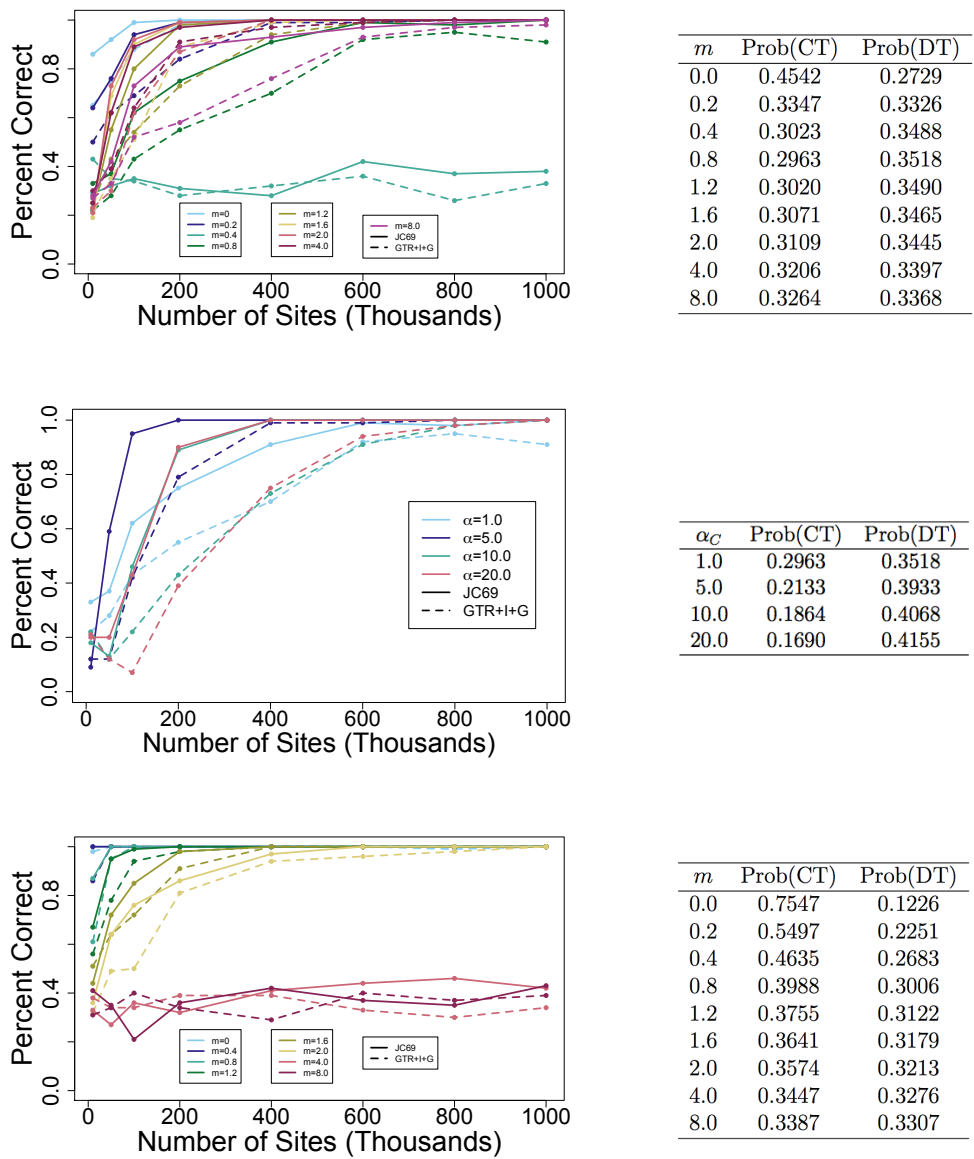}
\vspace{-2.25in}
\caption{\small Results of the simulation studies using coalescent independent sites.  In each row, the figure on the left shows the number of coalescent independent sites 
on the x-axis, and the accuracy (percent correct trees in 100 replicates) on the y-axis.  The  solid lines show results for the JC69 model, while
the  dotted lines show results for the GTR+I+G model. The table to the right gives the probability of the concordant and discordant topologies
for the values of the parameter being varied in that simulation.  Row (a): Simulation 1 varies $m$ for $\theta_{AB} = 20.0$; Row (b): Simulation 2 varies $\alpha_C$ for $m=0.8$; Row (c): Simulation 3 varies $m$ for $\theta_{AB} = 4.0$.}\label{fig:cis.sim}
\end{figure}

We investigated this unexpected result in two ways.  First, the theory above tells us that SVDQuartets is theoretically valid 
for all values of $m$, and so, with enough
data, the method should be able to accurately infer the species tree.  Thus, we repeated the simulation with $n=10,000,000$ bp, and found that 
SVDQuartets recovered the correct tree 79\% of the time 
for the JC69 model and 61\% of the time for the GTR+I+G model.  These results suggest that there is some feature of this model that makes inferring the species tree under these conditions more difficult,  in the sense that more data are required for accurate inference. Our second approach was therefore
to try to understand why this might be the case.  We considered the predicted frequencies of the gene tree histories 
(computed using the COALGF software \cite{tiankubatko2016}) in the context of the theory in the previous section that shows
 the validity of SVDQuartets. In particular,
we note that  the signal for inferring the species tree comes from the expected equality of site patterns $p_{i_1i_2i_3i_4}$ and $p_{i_2i_1i_3i_4}$, but 
that many different histories provide observations of these site patterns.  For this choice of parameters, the particular histories that carry most of the information about the species tree
occur in very low frequency, and thus large sample sizes are required for the signal to become definitive.   While 
this choice of parameters is 
especially difficult for SVDQuartets in that more data are required for good performance, the range of these ``bad'' parameter choices appears to be narrow, since the setting $m=0.2$ (navy lines in Figure  \ref{fig:cis.sim}a) and $m=0.8$ (dark green lines in Figure \ref{fig:cis.sim}a) both show good performance.

Our next observation from the results displayed in Figure  \ref{fig:cis.sim}a is that, in general, when data are simulated under the GTR+I+G model, 
more data are needed 
to achieve a given level of accuracy.  This is reasonable, since more complicated models may require more data for accurate inference. Finally, we note that as $m$ increases, 
the method again requires more data for accurate inference (with the exception noted above).  While the values of $n$ (the number of coalescent 
independent sites) considered here are large, many of those sites will be constant or non-informative.  To emphasize this, Figure \ref{fig:site.pattern.hists}  shows histograms of the number of 
variable sites (Figures \ref{fig:site.pattern.hists}a and c) and number of parsimony informative sites (Figures  \ref{fig:site.pattern.hists}b and d) for the cases where $m=0$ and where $m=2.0$.  When $m=0$, the mean proportion of variable
sites across 100 replicates under the JC69 model was 30.13\%, and the mean proportion of parsimony-informative sites was 3.97\%.  When $m=2.0$, the mean proportion of variable
sites was 26.19\%, and the mean proportion of parsimony-informative sites was 4.23\%. Thus, the number of SNPs required (using the standard definition of SNPs) would be much lower than the values of $n$ considered here.

\begin{figure}
\begin{center}
\includegraphics[width = 5.5in]{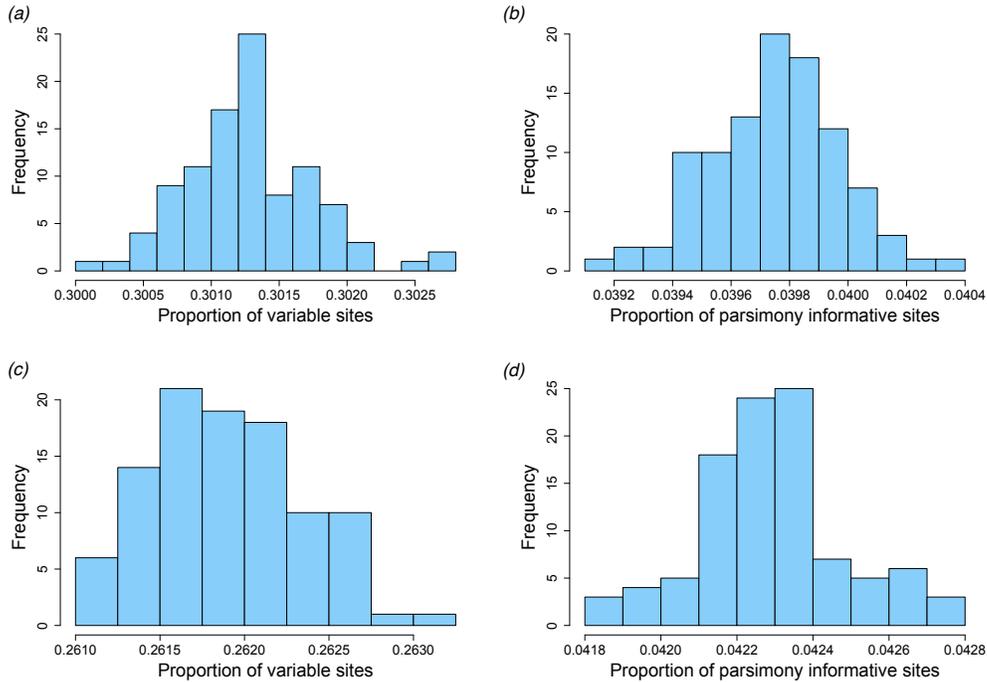}
\caption{\small Proportion of variable sites (a and c) and parsimony-informative sites (b and d) for the coalescent independent sites simulations under the JC69 model 
when $m=0$ (a and b) and when $m=2.0$ (c and d).}\label{fig:site.pattern.hists}
\end{center}
\end{figure}

The results of the second simulation study, in which population sizes are varied while $m$ is fixed at 0.8, are shown in Figure \ref{fig:cis.sim}b.  The general observations 
in this case are the same as above.  In particular, as the sample size $n$ increases, the accuracy of the method increases as well, with accuracy at or near 100\% for the largest
sample sizes considered. Also, the accuracy for data simulated under the GTR+I+G model is lower at a fixed $n$ than for those simulated under JC69, a finding not unexpected
given that GTR+I+G is a more complicated model than JC69.  
Finally, we point out that SVDQuartets is successful even in the case of extreme incongruence.
 For example, for the largest $\alpha_C$  
considered ($\alpha_C = 20.0$), 
the probability of the gene tree topology that matches the species tree is only 16.9\%, while each of the other two gene tree topologies have probability  41.55\%.  Thus, though
fewer than 
20\% of the data points are obtained from the gene tree matching the species tree, SVDQuartets is able to recognize the patterns in the data once a sufficient 
amount of data are observed
and correctly infer the species tree with high accuracy.

In the third simulation study using coalescent independent sites, there are no anomalous gene trees; instead, the probabilities of each of the three gene trees are relatively 
similar, with the probability of the gene tree that matches the species tree a little bit larger.   The results of this simulation are shown in Figure \ref{fig:cis.sim}c, and
indicate that the same basic patterns of the previous simulations are maintained.  Specifically, accuracy increases with $n$ and decreases as we move from the simpler model
(JC69) to the more complex model (GTR+I+G).  We also note that accuracy decreases as $m$ increases, indicating that the addition of gene flow to the model makes it more
difficult to infer the correct species tree.  However, for all but the largest values of gene flow ($m=4.0$ and $m=8.0$) the accuracy is over 90\% with 400,000 or more sites, while
for $m=4.0$ and $m=8.0$, the accuracy is only around 30\% at all levels of $n$ considered.  This level of gene flow perhaps represents the case where the extent of genetic exchange between populations is large enough that a species tree cannot be determined when all three trees are observed with approximately equal frequency.  Interestingly,
however, in the first simulation study, these high levels of gene flow did not hinder accurate estimation. We conjecture that the signal in the case of anomalous gene trees
in the first simulation study makes it easier to identify the correct species tree, whereas in this case, all three trees appear in near equal frequency.  In addition, the role
of the mutation process and distribution of coalescent times likely play a role in determining the difficulty of the inference problem, and these are more difficult to examine.
However, the theory above does indicate that with enough data, SVDQuartets should be able to accurately infer the species tree, even when the level of gene flow is high.  To
examine the issue in more detail, we repeated the simulation for $m=8.0$ with $n=10,000,000$ sites for the JC69 model, and found that the proportion of correctly estimated trees had risen to 54\%, compared to 39\% when $n = 1,000,000$, indicating that with sufficient data, the method
can correctly recover the species tree.

\subsubsection*{Multilocus Data.} We repeat all three of the simulation studies described above in the setting in which multilocus data, rather than coalescent independent sites, have been collected.  The first step of each simulation is the same as described above: the \texttt{ms} software is used to simulate a set of gene trees from the model
species tree in Figure \ref{fig:sim.model}.  The number of gene trees simulated, denoted $G$, will correspond here to the number of genes, and we consider $G$ ranging from 200 to 3,000 
in increments of 200.  Then, for each gene tree, an alignment of 300bp is simulated using Seq-Gen, with the same settings as above. As before, 100 replicates at each
simulation setting are generated. We analyze each simulated multilocus data set with SVDQuartets, as well as with two other software packages designed to estimate
species trees from multilocus data under the coalescent model, ASTRAL \cite{mirarabetal2014} and MP-EST \cite{liuetal2010}.   As mentioned above, both ASTRAL and MP-EST are summary methods that work by
first estimating gene trees for each gene, and then looking for subtrees that appear most frequently in the set of estimated gene trees. 
ASTRAL uses (unrooted) quartet frequencies across 
gene trees to find the species tree, while MP-EST considers rooted triples.  In either case, we expect that ASTRAL and MP-EST will have 
difficulty in estimating the species tree in the
presence of anomalous gene trees, because incorrect relationships will appear at highest frequency across the genes in this case.    In our simulations, we use PAUP* to estimate the gene
trees using maximum likelihood with the GTR+I+G model with all parameters estimated.  In the case of ties in the likelihood score, we randomly keep
one of the tied trees using the PAUP* option \texttt{keepone=random}. The gene trees estimated by PAUP* are used as input into ASTRAL and MP-EST.

\subsubsection*{Results of the Multilocus Simulations.}
The results of the first multilocus simulation are shown in Figure \ref{fig:multi.sim}a and \ref{fig:multi.sim}b for the JC69 model and GTR+I+G model, respectively, for varying
 $m$. First, we note 
that the accuracy of SVDQuartets (solid lines) is similar to that of Simulation 1 for the coalescent independent sites data, except that the number of sites required to achieve 
the same accuracy in the multilocus setting is larger. This is because, for example, 600 sites sampled from 2 genes (300bp
per gene, as in our multilocus simulation study) give less
information about the species tree than 600 sites sampled from 600 gene trees (as in our coalescent independent sites simulations).  The 600 sites observed from 600
distinct gene trees give independent genealogical information about the species tree, though indirectly, while the 300 sites for each of the two genes can give a reasonable indication of the individual gene trees, but still provide only two observed gene genealogies. 

Next, we note that the number of genes needed to achieve a given level of accuracy is larger for the GTR+I+G model than for the JC69 model, similar to what was observed for the 
coalescent independent sites simulations.   Comparing the performance of ASTRAL and MP-EST with SVDQuartets, we see that all three methods perform well when there is no
gene flow or when the level of gene flow is low ($m=0.2$), with ASTRAL and MP-EST outperforming SVDQuartets when the number of genes is not too large. However, as the 
amount of gene flow increases, both ASTRAL and MP-EST have accuracy that decreases to 0\% as the number of genes increases, as expected. It is interesting to note that the
unusual portion of the parameter space that was observed in the coalescent independent sites simulations (i.e., $m=0.4$) exists here as well -- and all three methods struggle 
to obtain accurate inference at smaller sample sizes under this setting.  We maintain the same explanation as above for the
performance of SVDQuartets, and note that we expect the accuracy of SVDQuartets to increase as the data size increases. To test this, we 
repeated our simulation with 50,000 genes of length 300bp under the JC69 model, and observed that 60\% of the species tree were accurately
estimated, supporting our explanation that larger sample sizes are need for more challenging problems.
However,  it is somewhat  puzzling that ASTRAL and MP-EST do as well as they do in this case, since the gene tree that matches the species tree should 
be estimated less frequently than the other two gene trees.  We conjecture that this results from an interaction between the mutation process and the process of generating the gene
genealogies, and that this effect would disappear if longer (i.e., $> 300$bp), more informative genes were generated.   We note that our observation is similar to other observations
 in which species trees in the anomaly zone do not necessarily lead to inconsistent inference, while those outside the anomaly zone may \cite{kubatkodegnan2007,huangknowles2009}. In other words, the anomaly zone is not the only predictor of performance of species tree inference methods; rather, the mutation 
process also plays an important role.

\begin{figure}[!h]
    \centering
    \begin{minipage}{.5\textwidth}
        \centering
        {\includegraphics[width = 3.0in]{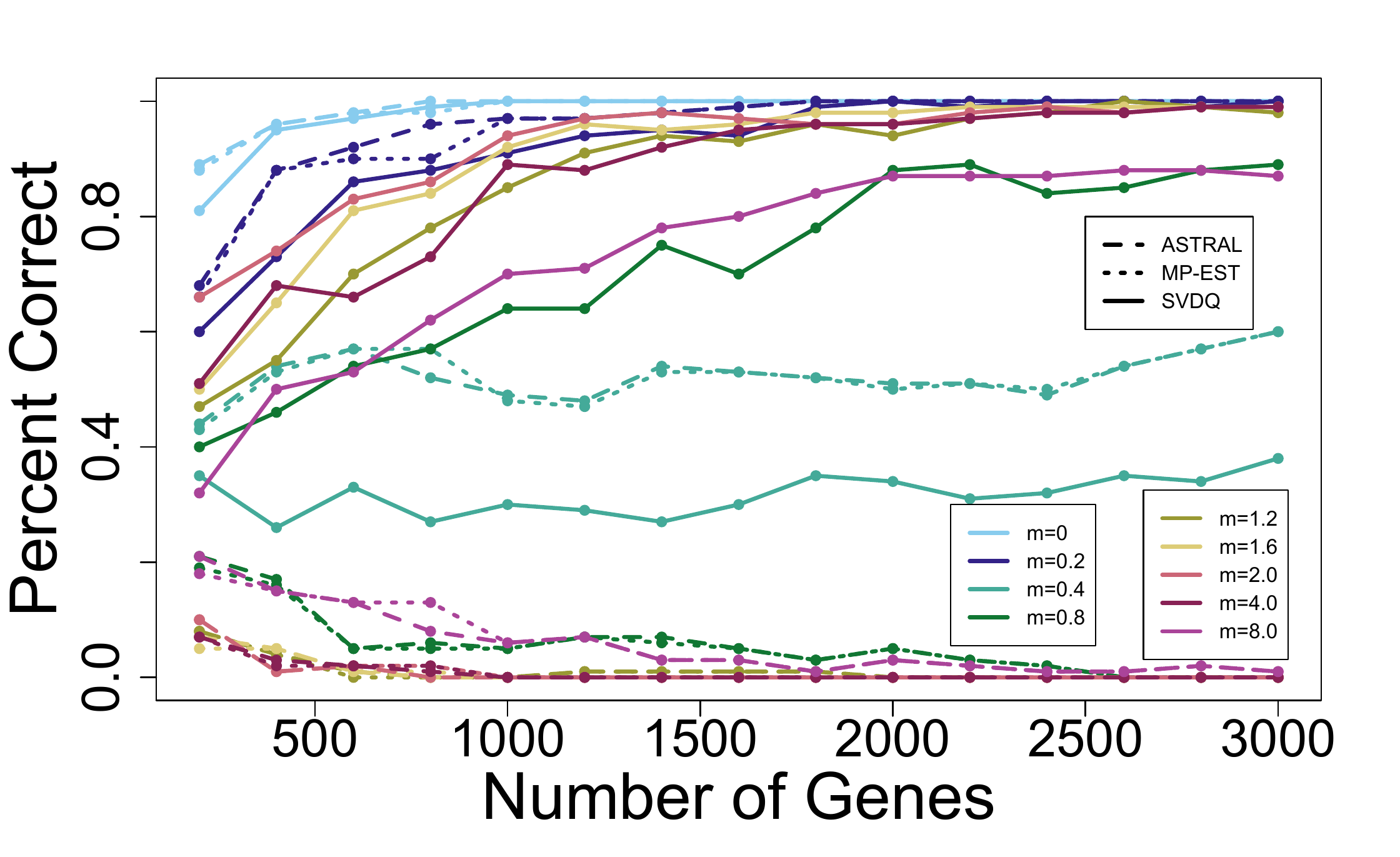}}\\
        (a)
        \label{fig:prob1_6_2}
    \end{minipage}%
    \begin{minipage}{0.5\textwidth}
        \centering
       {\includegraphics[width=3.0in]{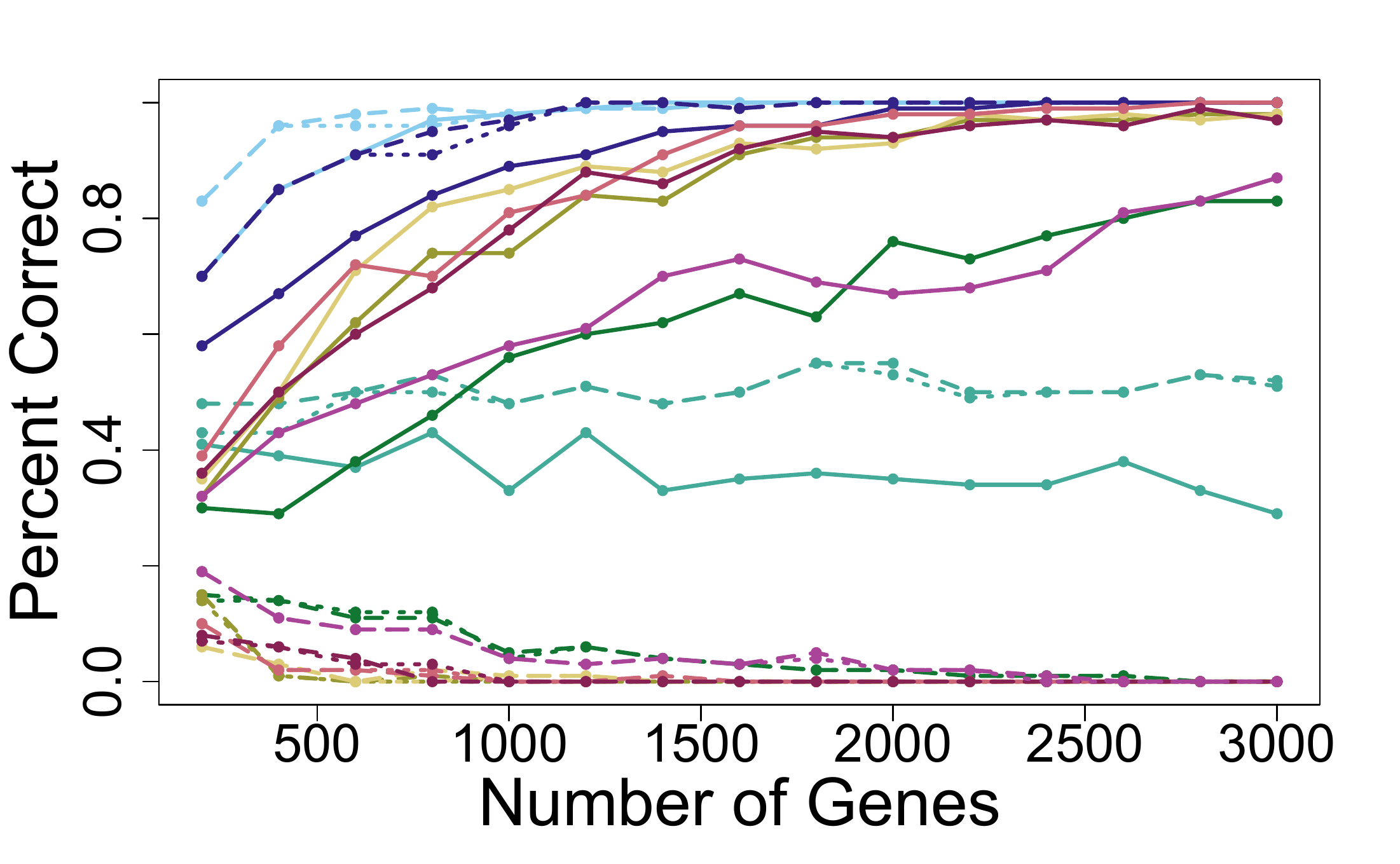}}\\
       (b)
              \label{fig:prob1_6_1}
    \end{minipage}

    \centering
    \begin{minipage}{.5\textwidth}
        \centering
        {\includegraphics[width = 3.0in]{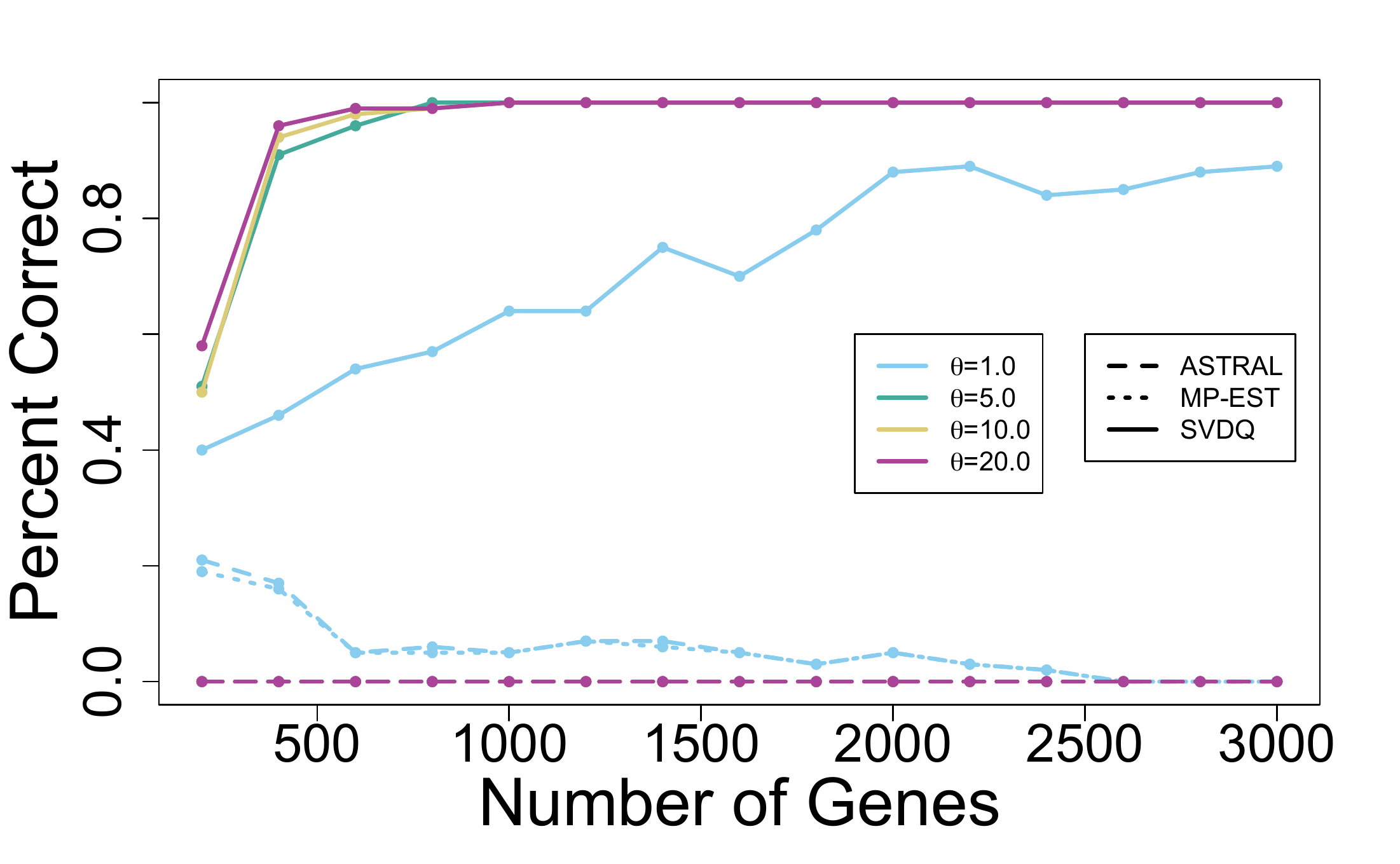}}\\
        (c)
               \label{fig:prob1_6_2}
    \end{minipage}%
    \begin{minipage}{0.5\textwidth}
        \centering
       {\includegraphics[width=3.0in]{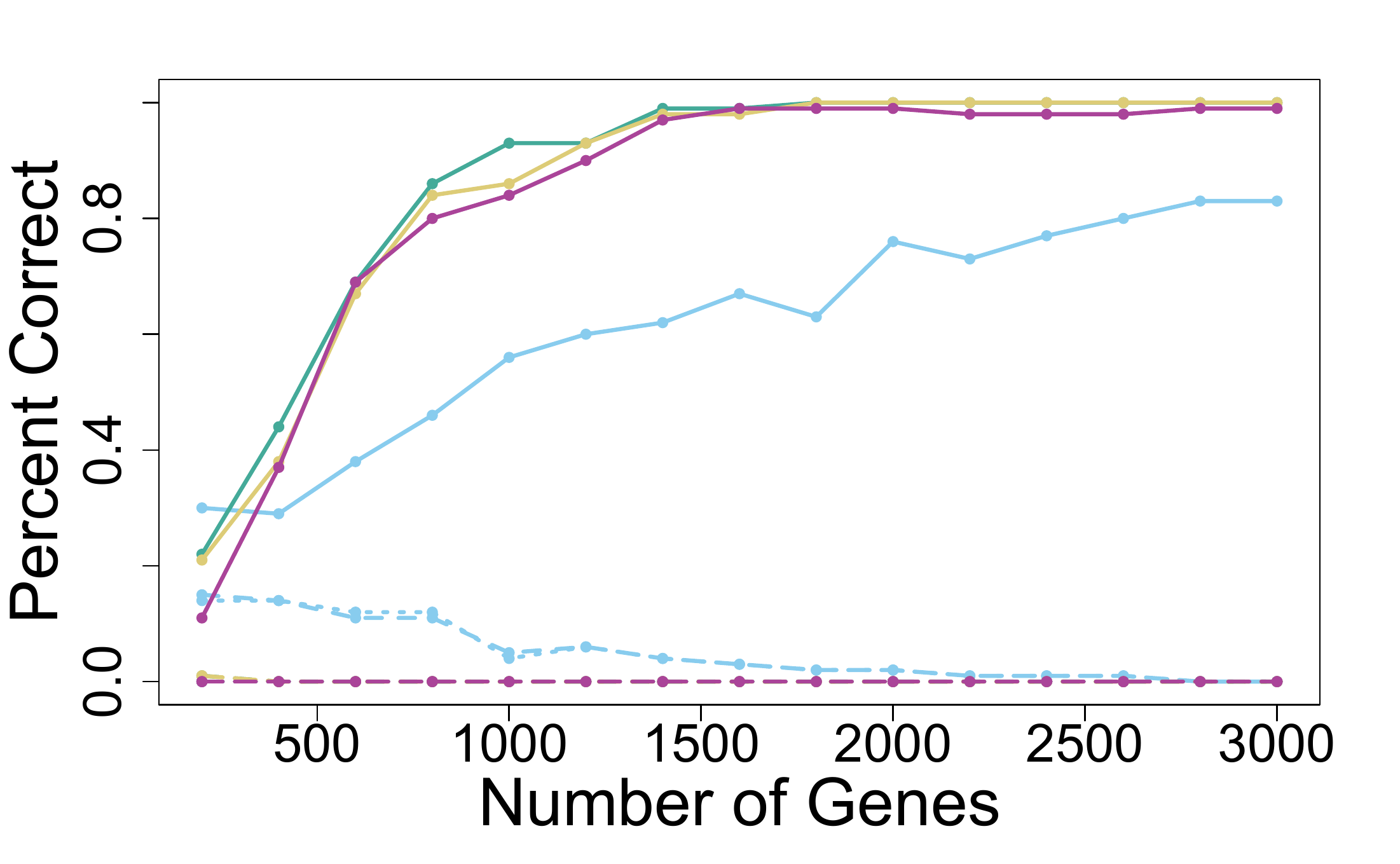}}\\
       (d)
        \label{fig:prob1_6_1}
    \end{minipage}

    \centering
    \begin{minipage}{.5\textwidth}
        \centering
        {\includegraphics[width = 3.0in]{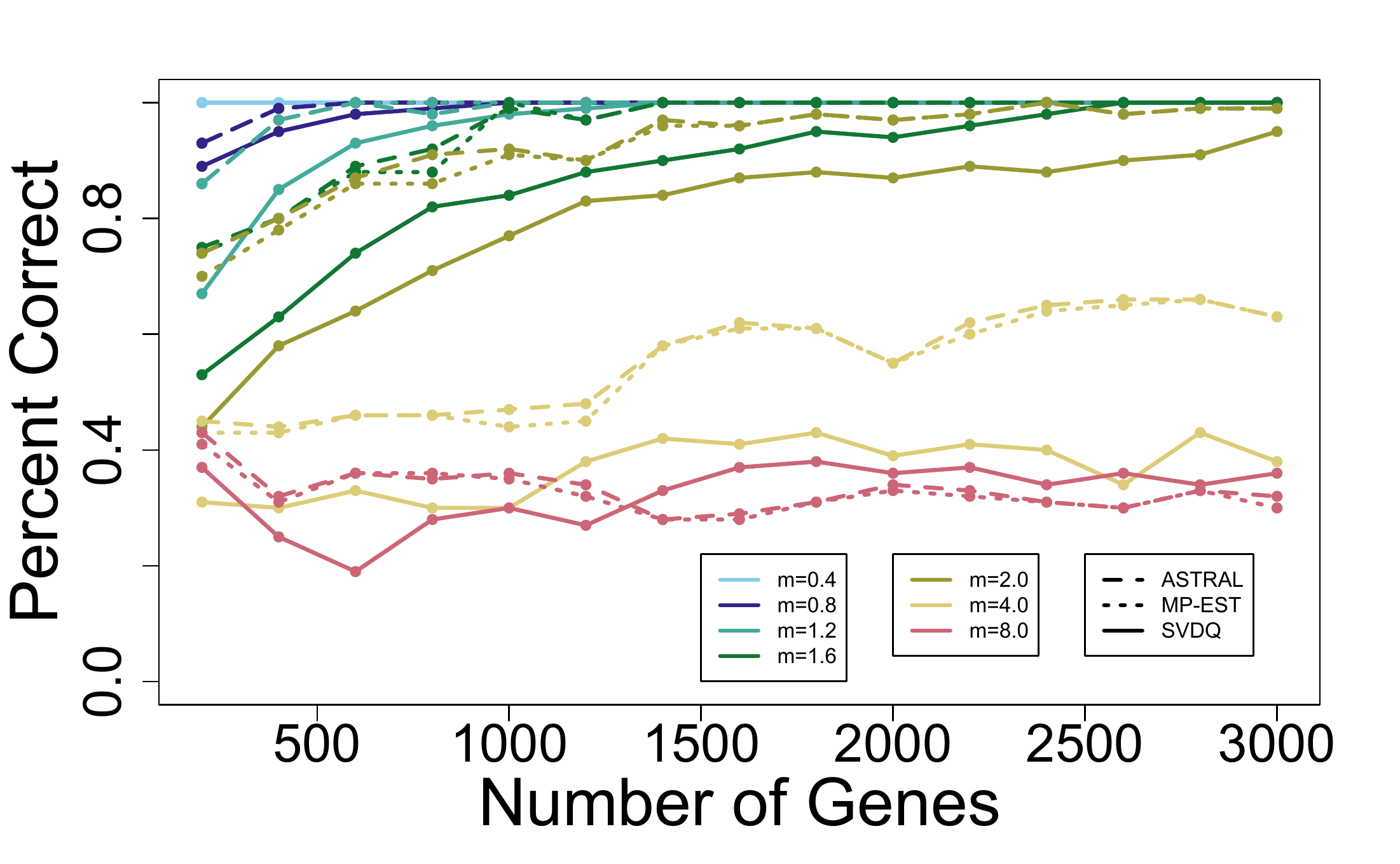}}\\
        (e)
        \label{fig:prob1_6_2}
    \end{minipage}%
    \begin{minipage}{0.5\textwidth}
        \centering
       {\includegraphics[width=3.0in]{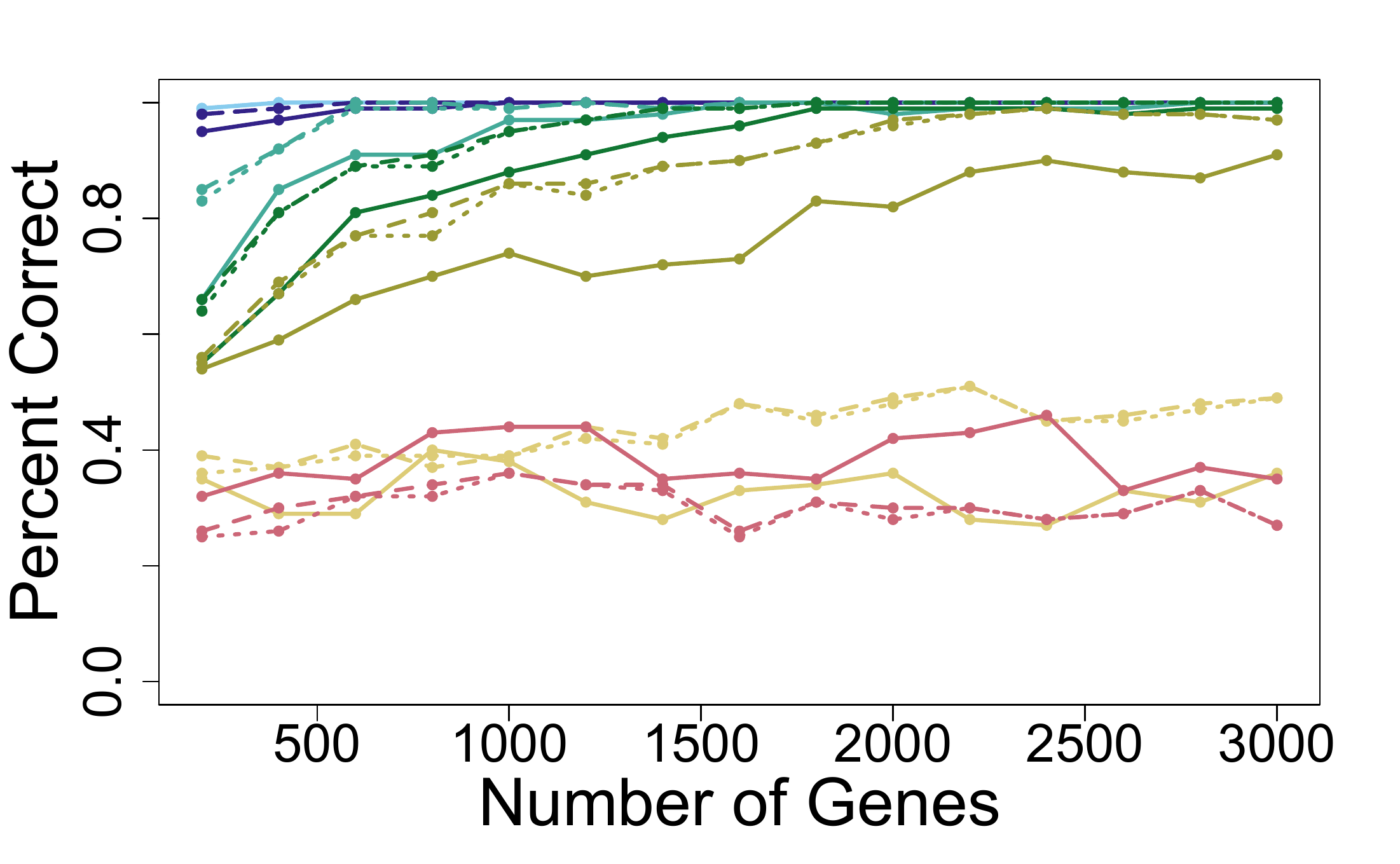}}\\
       (f)
               \label{fig:prob1_6_1}
    \end{minipage}
    \caption{Results of the multilocus simulation study.  The rows give the results of simulations 1, 2, and 3, respectively, while
    the left column gives results for the JC69 model and the right column gives results for the GTR+I+G model. In each 
    graph, the x-axis gives the number of genes, each 300bp in length, and the y-axis shows the percent of trees correctly 
    estimated in 100 replicates.  Parameter settings for simulations 1, 2, and 3 are as in Figure \ref{fig:cis.sim}.}\label{fig:multi.sim}
\end{figure}

The results of the second multilocus simulation study are shown in Figure \ref{fig:multi.sim}c (JC69 model) and  \ref{fig:multi.sim}d (GTR+I+G model) when $m$ is fixed at 0.8 and $\alpha_C$ varies. Since there are anomalous gene trees in this case, we expect ASTRAL and MP-EST to perform poorly, and this is in fact 
what is observed. 
Their accuracy goes to 0\% quickly for all choices of $\alpha_C$. SVDQuartets also behaves as expected, with accuracy increasing to 100\% as sample size, in this case, the number of 
genes, increases, for all choices of $\alpha_C$.

The  results of third multilocus simulation study are shown in Figures \ref{fig:multi.sim}e (JC69 model) and \ref{fig:multi.sim}f (GTR+I+G model). Recall 
that in this case, there are no anomalous gene trees, but that the three gene tree topologies occur in relatively equal frequency.  For both the JC69 and the GTR+I+G 
models, all three methods perform well when the level of gene flow is not too high ($m\leq2.0$), but their performance deteriorates when $m=4.0$ or $m=8.0$.  These results are 
consistent with what was observed for the coalescent independent sites data.

\section*{Discussion}
\label{sec: Discussion}
Our goals in this paper are two-fold: first, to understand how gene flow affects 
the distribution of gene tree topologies under the coalescent; and second, 
to assess how gene flow impacts the accuracy of species tree estimation methods. 
Towards the first goal, we have shown that in a simple 3-taxon species tree, 
the combination of gene flow and variation in effective population sizes across the tree
can result in anomalous gene trees.  
Under a simplified model in which gene flow occurs only between ancestral sister populations, 
we derived the probability of obtaining a gene 
tree concordant with the species tree, and characterized the {\itshape gene flow anomaly zone},
the regions of parameter space in which the two discordant gene trees each 
have higher probability than that of the concordant gene tree. 
Our results show that when asymmetric migration rates are allowed, the probability of the concordant topology can be made arbitrarily small, while the probabilities of each of the two discordant topologies approach 50\%.  
However, when symmetric gene flow is assumed, the probability of the concordant topology is bounded below by $1/9$, with values
between $1/9$ and $1/3$ resulting in anomalous gene trees.   In addition, in the case of symmetric gene flow, we found that the rate of coalescence in the recipient population must be more than 3 times as large as that of the donor population (looking backward in time) for anomalous gene trees to exist under this simplified model.

Figure \ref{fig: SurfacePlot} highlights that, though the model is simple, the gene flow anomaly zone can be fairly complicated due to the interaction between the migration rate parameters and
the population sizes.  The lengths of the speciation intervals also play a role, as these interact with the coalescent rate to determine the probability with which coalescent events occur within specific populations on the species tree. 
While our results apply directly to the reduced IM model from Figure \ref{fig: reducedIMmodel}, we can extend these
to conclude that there are still anomalous gene trees in the full 3-taxon IM model in Figure \ref{fig: IMmodel}.
The main differences between these models is that the 3-taxon IM model includes an interval of gene flow between populations $A$ and $B$ and an interval of no gene flow between populations $AB$ and $C$. 
During each of these intervals, the only coalescent events possible are those leading to a gene tree that is concordant 
with the species tree. Thus, with all other parameters fixed, the probabilities of the concordant topology will increase 
as the length of either of these intervals increases (with a corresponding decrease in the proportion of discordant topologies). Still, as the length of these intervals goes to zero, the model approaches the reduced IM model and 
there will be choices of parameters producing anomalous gene trees. Of course, if the interval of gene flow between populations $A$ and $B$ is decreased and the interval of no gene flow between populations $AB$ and $C$ is increased, the model approaches the standard model with no gene flow, for which it is known that there are no anomalous gene trees. 

Towards our second goal, we argued that SVDQuartets is a theoretically valid method
in the presence of gene flow and supported this result with simulation studies.
 For example, even in the case in which the probability of the concordant tree was only $\sim 17\%$ (see Figure \ref{fig:cis.sim}b), 
SVDQuartets correctly inferred the species tree with over 95\% accuracy for 400,000 coalescent independent sites under the JC69 model, and 
for 800,000 coalescent independent sites for the GTR+I+G model.  While SVDQuartets shows similar accuracy when 
multilocus data are simulated, ASTRAL and MP-EST often perform poorly in the gene flow anomaly zone, with accuracy decreasing to 0\% as the number of genes increases. This is to
be expected in a sense, because both ASTRAL and MP-EST are based on models that assume immediate cessation of gene flow following speciation and constant effective population size.   Our simulations thus indicate that these methods can be extremely sensitive to violations 
of these assumptions.

Certain model conditions were difficult for all three of the methods. In particular,
very high rates of gene flow resulted in poor accuracy for all methods with 3,000 genes, even for parameter choices outside of the gene flow anomaly zone. This observation is consistent with
studies of the performance of species tree estimation methods in the anomaly zone \cite{kubatkodegnan2007,huangknowles2009}, for which it has been observed that there is also
an important effect of the mutation process on accuracy.  
We also note that in the presence of very high levels of gene flow, it may even be argued that species should not actually 
be considered distinct.
But even in these situations, it is reassuring that with sufficient data, SVDQuartets can overcome the conflicting signal in the data and correctly
infer the true relationships.

The model in Figure \ref{fig: IMmodel} is of particular interest, because it represents the biologically plausible scenario in which speciation occurs with subsequent gene flow between populations 
for some period of time, which may be more realistic than speciation with instantaneous cessation of gene flow.  This was our rationale for focusing on a model that only allowed 
gene flow between sister populations for a specific time following speciation.  The fact that SVDQuartets remains theoretically valid under this model means that it can be confidently applied to
empirical problems for which gene flow is thought to have occurred between any pair of sister taxa. Although SVDQuartets sometimes requires a large amount of data when the underlying
substitution model is complex, it appears to be statistically consistent even in these complicated scenarios. In practice, bootstrap support values would provide a reasonable measure of 
uncertainty in the species tree estimate, 
and would be expected to be low when the amount of data was insufficient.  
In contrast, methods like ASTRAL and MP-EST could be statistically 
inconsistent, and bootstrap support values would likely not reflect this, because they assume an incorrect model (e.g., they assume speciation with immediate cessation of gene flow and constant effective population size).  Thus, ASTRAL and MP-EST could be 
said to be positively misleading, in the sense of \cite{kubatkodegnan2007}, while SVDQuartets provides 
inference with an appropriate quantification of uncertainty.

Though we have focused our discussion largely on the effect of gene flow, we note that the model in Figure \ref{fig: IMmodel} generalizes previous species tree estimation frameworks
by allowing variation in the effective population sizes, which determine the rate of coalescence across the species tree, even in the absence of gene flow.  Further, we note that the results presented here
concerning accuracy of the SVDQuartets method also apply to the 
case in which the species tree does not satisfy the molecular clock, following the arguments provided in \cite{longkubatko2017}.  Thus SVDQuartets holds under very general conditions, as 
it allows variation in effective population sizes across the species tree, violation of the molecular clock, and gene flow between sister taxa following speciation.

One natural extension of the model in Figure \ref{fig: IMmodel}  is to allow gene flow between all pairs of taxa, including non-sisters. We have not yet considered this case, but speculate that
certain choices of gene flow rates and effective population sizes might be problematic even for SVDQuartets, though this warrants further exploration.  Though we consider here only the case of 
four-taxon species trees in our simulation studies, we note that for SVDQuartets this is sufficient, since the overall species-level phylogenetic estimate is obtained through assembly of the 
inferred relationships among quartets. For ASTRAL and MP-EST, however, we expect that difficulties in the four-taxon case will translate into reduced accuracy on larger trees, as well.  

Models for estimation of species-level phylogenies based on data collected across the genome must 
necessarily try to incorporate realistic evolutionary mechanisms.  
By examining a model of speciation with gene flow, we have highlighted the complexity involved in using 
genealogical data to infer species-level relationships, as well as the impact that this complexity has on some of the commonly-used methods of species tree inference. Our finding that the 
method on which SVDQuartets is based holds for this more general model reinforces that this method is a valuable tool for estimating species trees from genome-scale data.  Future work
should continue to focus on improving the biological realism on 
which methods for species tree estimation are based.

\bibliography{bibfile}
\bibliographystyle{abbrv}

\end{document}